\documentclass[11pt]{article}
\usepackage{amsmath,amssymb,amsfonts}
\usepackage{amsthm}
\usepackage{array}
\usepackage{graphicx}
\usepackage{xcolor}
\usepackage{enumitem}
\usepackage{url}
\usepackage{hyperref}
\usepackage{aliascnt}
\usepackage{cleveref}

\theoremstyle{plain}
\newtheorem{theorem}{Theorem}

\newaliascnt{lemma}{theorem}
\newtheorem{lemma}[lemma]{Lemma}
\aliascntresetthe{lemma}

\newaliascnt{proposition}{theorem}
\newtheorem{proposition}[proposition]{Proposition}
\aliascntresetthe{proposition}

\newaliascnt{corollary}{theorem}
\newtheorem{corollary}[corollary]{Corollary}
\aliascntresetthe{corollary}

\theoremstyle{definition}
\newaliascnt{definition}{theorem}
\newtheorem{definition}[definition]{Definition}
\aliascntresetthe{definition}

\newaliascnt{remark}{theorem}
\newtheorem{remark}[remark]{Remark}
\aliascntresetthe{remark}

\newaliascnt{example}{theorem}

\aliascntresetthe{example}

\crefname{theorem}{Theorem}{Theorems}
\Crefname{theorem}{Theorem}{Theorems}
\crefname{lemma}{Lemma}{Lemmas}
\Crefname{lemma}{Lemma}{Lemmas}
\crefname{corollary}{Corollary}{Corollaries}
\Crefname{corollary}{Corollary}{Corollaries}
\crefname{proposition}{Proposition}{Propositions}
\Crefname{proposition}{Proposition}{Propositions}
\crefname{definition}{Definition}{Definitions}
\Crefname{definition}{Definition}{Definitions}
\crefname{remark}{Remark}{Remarks}
\Crefname{remark}{Remark}{Remarks}
\crefname{example}{Example}{Examples}
\Crefname{example}{Example}{Examples}
\crefname{table}{Table}{Tables}
\Crefname{table}{Table}{Tables}
\crefname{figure}{Figure}{Figures}
\Crefname{figure}{Figure}{Figures}
\crefname{section}{Section}{Sections}
\Crefname{section}{Section}{Sections}
\crefname{equation}{}{}
\Crefname{equation}{Equation}{Equations}

\DeclareMathOperator{\rank}{rank}
\DeclareMathOperator{\Per}{Per}
\DeclareMathOperator{\wt}{wt}
\DeclareMathOperator{\PG}{PG}

\newcommand{\gen}[1]{\langle #1 \rangle}

\begin{document}

\title{Binary Caps and LCD Codes with Large Dimensions}

\author{Keita Ishizuka\thanks{Information Technology R\&D Center, Mitsubishi Electric Corporation, Japan, \texttt{keitaishizuka1994@gmail.com}}
\and
Yuhi Kamio\thanks{
Graduate School of Mathematical Sciences, University of Tokyo
\texttt{emirp13@g.ecc.u-tokyo.ac.jp}}
}
\date{}

\maketitle

\begin{abstract}
We establish a connection between linear complementary dual (LCD) codes and caps in projective space. Using this framework and the structure theory of maximal caps, we derive nonexistence theorems for LCD codes with minimum distance at least $4$, providing computation-free proofs that were previously obtained only through exhaustive search. As an application, we completely determine the optimal minimum distances for codimensions $7$ and $8$ for the first time.
\end{abstract}

\noindent\textbf{Keywords:} LCD codes, large-dimension codes, projective caps, finite projective geometry

\section{Introduction}

Linear complementary dual (LCD) codes are codes that intersect their dual codes trivially. Introduced by Massey \cite{massey1992linear} for the two-user binary adder channel, LCD codes have found important applications in cryptography, particularly for protection against side-channel attacks and fault injection attacks \cite{carlet2016complementary}.
A breakthrough result by Carlet et al. \cite{carlet2018equivalence} showed that any code over $\mathbb{F}_q$ is equivalent to some LCD code for $q \geq 4$. This motivates the study of binary and ternary LCD codes, as these are the only fields where the LCD property imposes genuine constraints on the code structure. For binary codes, determining the largest minimum distance $d_2^E(n,k)$ among all LCD $[n,k]$ codes is a fundamental problem.

For arbitrary $n$ and various fixed $k$, the values $d_2^E(n,k)$ have been previously determined: $d_2^E(n,1)$ through $d_2^E(n,5)$ for small dimensions, and $d_2^E(n,n-i)$ for $i \in \{1,2,3,4,5\}$ for small codimensions (see \cite{characterizations2024} for a comprehensive list). The determination of $d_2^E(n,n-m)$ for larger codimensions $m$ has remained a challenge.
Recently, the first author and collaborators \cite{characterizations2024} made progress in understanding LCD codes of large dimensions. Their key insight was to use Hamming codes---whose parity-check matrices ensure any $2$ columns are linearly independent---to analyze when $d_2^E(n,n-m) \geq 3$. Through this approach, they characterized the transition to $d_2^E(n,n-m) = 2$. However, their Hamming code framework cannot address when $d_2^E(n,n-m) \geq 4$, as this requires $3$-independence rather than $2$-independence.

This limitation has significant consequences. For $m=6$, computational searches revealed that $d_2^E(n,n-6)$ exhibits an alternating pattern between $3$ and $4$ based on the parity of $n$ (for $17 \leq n \leq 26$). While exhaustive computational verification confirmed that no LCD $[n,n-6,4]$ codes exist for odd $n$ in this range, the lack of a theoretical framework for $d \geq 4$ forces reliance on such brute-force methods. This computational bottleneck makes it infeasible to extend these results to larger codimensions, leaving the determination of $d_2^E(n,n-m)$ for $m \geq 7$ open.
This motivates us to uncover the hidden structure behind the alternating pattern, thereby establishing nonexistence results and extending the determination of $d_2^E(n,n-m)$ to larger codimensions where exhaustive computation is infeasible.

\paragraph{Our Contribution.}

This paper extends the framework from Hamming codes ($2$-independent sets) to caps ($3$-independent sets). By analyzing the structure of large caps, we uncover the geometry behind the alternating pattern and establish nonexistence theorems for LCD codes with $d \geq 4$.

Our key contributions are:
\begin{enumerate}
    \item \emph{LCD characterization via caps}: For a cap $S \subseteq \mathbb{F}_2^m$, the code $C_S$ is LCD if and only if the matrix $U_S = \sum_{s \in S} ss^T$ is nonsingular. This transforms the algebraic LCD problem into a geometric question about point configurations.
    \item \emph{Structure theorem for large caps}: Using Bruen and Wehlau's theory \cite{bruen1999long}, we establish a sufficient condition for caps to lie in hyperplane complements (\cref{thm:main}). This result is of independent interest in cap theory and provides the key geometric constraint.
    \item \emph{Nonexistence theorems}: Combining the above results, we prove that any LCD $[n,n-m,d]$ code with $d \geq 4$ must satisfy $n \leq 2^{m-1} - m$, and moreover $n \equiv m \pmod{2}$ when $n$ is sufficiently large, for all $m \geq 4$ (\cref{thm:nonexist} and~\cref{thm:nonexist_small}).
\end{enumerate}

These theorems provide computation-free proofs for the nonexistence results underlying the alternating pattern in $d_2^E(n,n-m)$, which previously required exhaustive search. As a consequence, we completely determine $d_2^E(n,n-7)$ and $d_2^E(n,n-8)$ for all $n$, resolving the open cases that were computationally infeasible.

\paragraph{Organization.}
The rest of this paper is organized as follows: \cref{sec:prelim} reviews the necessary background on LCD codes and caps. \cref{sec:gram} develops the theory of Gram matrices for caps and proves our structure theorem for large caps. \cref{sec:det} applies these results to completely determine $d_2^E(n,n-m)$ for codimensions $6$, $7$, and $8$. \cref{sec:conc} concludes with open problems.

\section{Preliminaries}\label{sec:prelim}

\subsection{Linear Codes and LCD Codes}

Let $\mathbb{F}_q$ denote the finite field of order $q$, where $q$ is a prime power. An $[n,k]$ code $C$ over $\mathbb{F}_q$ is a $k$-dimensional subspace of $\mathbb{F}_q^n$ not equal to $\{0\}$. Throughout this paper, we consider only linear codes over the binary field $\mathbb{F}_2$, and thus we omit the terms ``linear'' and ``binary'' when referring to codes.
For a vector $v = (v_1, \ldots, v_n) \in \mathbb{F}_2^n$, the \emph{Hamming weight} is defined as $\wt(v) = |\{i : v_i \neq 0\}|$.
A code $C$ is called \emph{even} if $\wt(c) \equiv 0 \pmod{2}$ for all $c \in C$.
The \emph{Hamming distance} between vectors $u, v \in \mathbb{F}_2^n$ is $d(u,v) = \wt(u-v)$.
The \emph{minimum distance} (or \emph{minimum weight}) of a code $C$ is $d(C) = \min\{\wt(c) : c \in C \setminus \{0\}\}$.
An $[n,k]$ code with minimum distance $d$ is denoted as an $[n,k,d]$ code.

For vectors $x = (x_1, \ldots, x_n)$ and $y = (y_1, \ldots, y_n)$ in $\mathbb{F}_2^n$, the standard inner product is defined as $\langle x, y \rangle = \sum_{i=1}^n x_i y_i$.
The dual code $C^\perp$ of $C$ is defined as $C^\perp = \{x \in \mathbb{F}_2^n : \langle x, c \rangle = 0 \text{ for all } c \in C\}$.
The \emph{hull} of a code $C$ is defined as $\mathrm{Hull}(C) = C \cap C^\perp$.
A code $C$ is called \emph{linear complementary dual (LCD)} if $\mathrm{Hull}(C) = \{0\}$, \emph{self-orthogonal} if $C \subseteq C^\perp$ (equivalently, $\dim \mathrm{Hull}(C) = k$), and \emph{self-dual} if $C = C^\perp$.
We denote by $d_2^E(n,k)$ the largest minimum distance among all LCD $[n,k]$ codes.
For a code $C$ with generator matrix $G$, the dimension of the hull can be computed from the Gram matrix $GG^T$.

\begin{lemma}[{\cite[Proposition~3.1]{guenda2018constructions}}]\label{lem:hull-dim}
Let $C$ be an $[n,k]$ code with generator matrix $G$. Then
\[
\dim \mathrm{Hull}(C) = k - \rank(GG^T).
\]
In particular, $C$ is LCD if and only if $GG^T$ is nonsingular, and self-orthogonal if and only if $GG^T=0$.
\end{lemma}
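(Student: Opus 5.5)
The plan is to identify $\mathrm{Hull}(C)$ with the kernel of the Gram matrix $GG^T$, viewed as a map $\mathbb{F}_2^k \to \mathbb{F}_2^k$, and then apply rank--nullity. Since the rows of $G$ form a basis of $C$, the map $x \mapsto x^T G$ is a linear isomorphism $\mathbb{F}_2^k \to C$. This lets us describe every codeword $c\in C$ uniquely as $c = x^T G$ with $x \in \mathbb{F}_2^k$.

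The key step is to characterize when such a $c$ lies in $C^\perp$. A codeword $c$ is orthogonal to all of $C$ if and only if it is orthogonal to each row of $G$, because these rows span $C$. In matrix form this condition reads $cG^T = 0$, that is, $x^T G G^T = 0$. Since $GG^T$ is symmetric, this is equivalent to $GG^T x = 0$. Hence the isomorphism $x\mapsto x^TG$ restricts to an isomorphism from $\ker(GG^T)$ onto $C\cap C^\perp = \mathrm{Hull}(C)$. Rank--nullity for the $k\times k$ matrix $GG^T$ then gives
\[
\dim \mathrm{Hull}(C) = \dim\ker(GG^T) = k - \rank(GG^T).
\]

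The two special cases follow directly. $C$ is LCD exactly when this dimension is $0$, i.e.\ when $\rank(GG^T)=k$, which means $GG^T$ is nonsingular. $C$ is self-orthogonal exactly when the dimension is $k$, i.e.\ when $\rank(GG^T)=0$, which means $GG^T=0$.

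No step here is a real obstacle. The only point needing care is the injectivity of $x\mapsto x^TG$, which is what makes the kernel correspondence dimension-preserving; it holds because $G$ has full row rank $k$.
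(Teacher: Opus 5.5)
Your proof is correct and complete. The paper gives no proof of this lemma; it quotes it from \cite{guenda2018constructions}, so there is no in-paper argument to compare against. Your route is the standard one: the isomorphism $x\mapsto x^TG$ from $\mathbb{F}_2^k$ onto $C$ carries $\ker(GG^T)$ exactly onto $C\cap C^\perp$, and rank--nullity then gives the dimension formula.

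A common alternative first chooses a generator matrix $G'$ whose leading rows span $\mathrm{Hull}(C)$. Its Gram matrix is then block diagonal, with a zero block of size $\dim\mathrm{Hull}(C)$. One checks that the complementary block is nonsingular, which needs essentially your kernel argument. The rank is then transferred to an arbitrary $G=AG'$, with $A$ invertible, via $GG^T=A\,G'G'^T A^T$.

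Your argument skips that change-of-basis step because it works for every generator matrix at once. It also shows directly that $\rank(GG^T)$ does not depend on the choice of $G$. Nothing in it uses that the field is $\mathbb{F}_2$, so it holds verbatim over any $\mathbb{F}_q$ with the Euclidean inner product.
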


\begin{lemma}[{\cite[Theorem 5]{carlet2018hull}}]\label{lem:even-lcd}
Even LCD codes must have even dimensions.
\end{lemma}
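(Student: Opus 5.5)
The plan is to prove the contrapositive: an even code of odd dimension has nonzero hull. Let $C$ be an even $[n,k]$ code with generator matrix $G$. By \cref{lem:hull-dim}, $C$ is LCD exactly when the Gram matrix $M = GG^T$ is nonsingular over $\mathbb{F}_2$. So it suffices to show that $M$ is singular whenever $k$ is odd.

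First I record the structure of $M$. It is symmetric, and its $(i,i)$ entry is $\langle g_i, g_i\rangle = \wt(g_i) \bmod 2$, where $g_i$ is the $i$-th row of $G$. Since $C$ is even, every row has even weight, so every diagonal entry of $M$ vanishes. Thus $M$ is a symmetric matrix over $\mathbb{F}_2$ with zero diagonal, i.e.\ an alternating matrix. Equivalently, the bilinear form $(x,y)\mapsto \langle x,y\rangle$ restricted to $C$ satisfies $\langle c,c\rangle = 0$ for all $c \in C$.

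Next I use the standard fact that an alternating matrix has even rank. To keep the argument self-contained, I would argue via the determinant. Write $\det M = \sum_{\sigma} \prod_i M_{i,\sigma(i)}$ over $\mathbb{F}_2$. Any permutation $\sigma$ with a fixed point contributes $0$, since the diagonal is zero. The remaining permutations are paired with their inverses, and $\sigma$ and $\sigma^{-1}$ give the same product because $M$ is symmetric. A permutation with $\sigma=\sigma^{-1}$ and no fixed point is a fixed-point-free involution, which requires $k$ to be even. So for odd $k$ every nonzero term is paired with a distinct equal term, and the terms cancel in characteristic $2$. Hence $\det M = 0$.

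It follows that $GG^T$ is singular, and so $C$ is not LCD by \cref{lem:hull-dim}. Therefore an even LCD code has even dimension. The only step needing care is the pairing argument, namely checking that the involutions (the fixed points of the pairing $\sigma\mapsto\sigma^{-1}$) are exactly the fixed-point-free involutions, which exist only when $k$ is even. Alternatively, one can cite the symplectic normal form: a nondegenerate alternating form admits a hyperbolic basis, which forces even dimension.
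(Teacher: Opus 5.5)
Your proof is correct. Evenness of $C$ makes every diagonal entry $\langle g_i,g_i\rangle=\wt(g_i) \bmod 2$ of $GG^T$ vanish, so $GG^T$ is alternating. Your pairing $\sigma\leftrightarrow\sigma^{-1}$ on derangements has no fixed points when $k$ is odd, so it shows $\det(GG^T)=0$, and $C$ is not LCD by \cref{lem:hull-dim}.

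The paper gives no proof of this lemma and simply imports it from Carlet et al. Your alternating-Gram-matrix argument is a complete, self-contained proof and the standard reasoning behind that fact.
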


\subsection{Caps in Projective Space}

A subset $S \subseteq \mathbb{F}_q^m$ is called a \emph{cap} if every subset $T \subseteq S$ with $\lvert T \rvert \leq 3$ is linearly independent.
A cap in $\mathbb{F}_2^m$ can equivalently be viewed as a set of points in the projective space $\PG(m-1, 2)$ with no three collinear. A cap $S$ is called \emph{maximal} if it is not properly contained in any other cap. A cap is called \emph{large} if $\lvert S \rvert \geq 2^{m-2} + 1$.

\begin{definition}
For a subset $S \subseteq \mathbb{F}_2^m$, the \emph{period} of $S$ is defined as
\[
\Per(S) = \{v \in \mathbb{F}_2^m : s + v \in S \text{ for all } s \in S\}.
\]
\end{definition}

Note that $\Per(S)$ is always a subspace of $\mathbb{F}_2^m$, and hence $|\Per(S)|$ is a power of $2$.
The following deep results from Bruen and Wehlau \cite{bruen1999long} characterize large maximal caps.

\begin{theorem}[{\cite[Corollary 3.14]{bruen1999long}; see also \cite{DavydovTombak}}]\label{thm:bw}
Let $S \subseteq \mathbb{F}_2^m$ be a maximal cap with $\lvert S \rvert \geq 2^{m-2} + 1$ and $m \geq 3$. Then there exists $j \in \{0,1,\ldots,m-4\} \cup \{m-2\}$ such that $|\Per(S)| = 2^j$ and $\lvert S \rvert = 2^{m-2} + 2^j$.
Moreover, if $j = m-2$, then $S = \mathbb{F}_2^m \setminus \langle v \rangle^\perp$ for some nonzero $v \in \mathbb{F}_2^m$.
\end{theorem}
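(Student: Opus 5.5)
We would recast the statement in additive-combinatorial language and then argue by induction on $m$, following the architecture of Bruen and Wehlau and of Davydov and Tombak. View $S\subseteq\mathbb{F}_2^m\setminus\{0\}$. Then $S$ is a cap if and only if $s_1+s_2\notin S$ for all $s_1,s_2\in S$, that is, $S$ is \emph{sum-free}; here $0=s+s\in S+S$ is harmless since $0\notin S$. Maximality means that every $x\notin S\cup\{0\}$ can be written $x=s_1+s_2$ with $s_i\in S$, so $\mathbb{F}_2^m=\{0\}\sqcup S\sqcup\bigl((S+S)\setminus\{0\}\bigr)$. The basic tool will be Fourier analysis on $\mathbb{F}_2^m$. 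Write $\widehat{1_S}(u)=\sum_{s\in S}(-1)^{u\cdot s}$. Counting solutions of $s_1+s_2+s_3=0$ gives
\[
\sum_u \widehat{1_S}(u)^3=0 ,
\]
and with $|S|=\widehat{1_S}(0)>2^{m-2}$ this forces some $u\neq 0$ with $\widehat{1_S}(u)\le -c|S|$ for a large constant $c$. We will first show that such a $u$ satisfies $\widehat{1_S}(u)=-(|S|-2t)$ with $t=|S\cap\langle u\rangle^\perp|$ small. The aim is to prove that $S$ is almost entirely contained in the affine hyperplane $A=\{x:u\cdot x=1\}$.

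\emph{Step 1 (the extreme case $j=m-2$).} Any subset of $A$ is automatically a cap, since a sum of three elements of $A$ lies in $A$ and so is nonzero. Hence if $S\subseteq A$, maximality forces $S=A=\mathbb{F}_2^m\setminus\langle u\rangle^\perp$. This gives $|S|=2^{m-1}$, $\Per(S)=\langle u\rangle^\perp$ and $j=m-2$, which is the ``moreover'' clause.

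\emph{Step 2 (splitting $S$ along the hyperplane).} Otherwise put $S_0=S\cap\langle u\rangle^\perp$ and $S_1=S\cap A$, with $S_0\neq\emptyset$. Sum-freeness gives $(S_0+S_1)\cap S_1=\emptyset$, so for each $s\in S_0$ the translate $s+S_1$ lies in $A\setminus S_1$. Since $|S_1|>2^{m-2}-|S_0|$, these translates are forced to nearly coincide. We then want to show that $S_1$ is a union of cosets of the subspace $W$ spanned by $S_0+S_0$, and that $S_0$ is a single coset of $W$ inside $\langle u\rangle^\perp$. This yields $|S|=2^{m-2}+|S_0|$ with $|S_0|=2^j$, together with $W\subseteq\Per(S)$; the reverse inclusion $\Per(S)\subseteq W$ should follow from maximality.

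\emph{Step 3 (induction and the size gap).} Quotient by $P=\Per(S)$. The image $\bar S$ is a maximal cap in $\mathbb{F}_2^{m-j}$ with trivial period and $|\bar S|=2^{m-j-2}+1$. The exclusion of $j=m-3$ then reduces to a small case check: in $\mathbb{F}_2^3$ a maximal cap with $2^{1}+1=3$ points has nontrivial period or extends to a $4$-point cap. Indeed, a $3$-point cap in $\mathrm{PG}(2,2)$ is never maximal, because the complement of a line has $4$ points. Hence $\bar S$ cannot live in dimension $3$, so $m-j\ge 4$.

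The main obstacle is Step 2, namely extracting the exact coset structure, rather than just approximate containment in $A$, from the bound $|S|>2^{m-2}$. Our first attempt would combine the translate argument above with a Kneser-type stabilizer argument in the elementary abelian group. The sumset $S_0+S_1$ is confined to $A\setminus S_1$, which has size less than $2^{m-1}-|S_1|$. This small doubling should force a nontrivial stabilizer, which is precisely $\Per(S)$. If this fails, we would fall back on the cited approach of Davydov and Tombak via their doubling construction.
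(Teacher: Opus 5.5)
\textbf{Gap: Step 2 is asserted, not proved.} The paper does not prove this statement; it imports it from Bruen--Wehlau (Corollary~3.14) and Davydov--Tombak. So the question is whether your outline stands on its own, and it does not. Steps~1 and~3 are fine. Step~2, however, is the entire content of the theorem (including the direction ``$|S|=2^{m-1}\Rightarrow$ hyperplane complement''), and it is only stated as a goal. The tools you name do not reach it.

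First, the identity $\sum_u \widehat{1_S}(u)^3=0$ together with Parseval only gives $\max_{u\neq 0}\bigl(-\widehat{1_S}(u)\bigr)\ge |S|^2/(2^m-|S|)$. For $|S|$ just above $2^{m-2}$ this is about $|S|/3$, so you only get $t=|S\cap\langle u\rangle^\perp|\lesssim |S|/3$, not ``small''. In that range the overlap bound $|(s+S_1)\cap(s'+S_1)|\ge 3|S_1|-2^{m-1}$ is essentially $0$, so the translates are not forced to nearly coincide.

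Second, the Kneser step fails as stated. By completeness, $S_0+S_1=A\setminus S_1$ exactly, so $|S_0+S_1|=2^{m-1}-|S_1|$ (not less). Kneser yields a nontrivial stabilizer only when $|S_0+S_1|\le |S_0|+|S_1|-2$, i.e.\ $|S_0|+2|S_1|\ge 2^{m-1}+2$. That is not implied by $|S|>2^{m-2}$; it already presupposes $|S_1|\approx 2^{m-2}$, which is what you are trying to prove.

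Moreover, no uniform stabilizer argument can output $\Per(S)$, because the case $j=0$ has trivial period. For the elliptic quadric $S=\{e_1,e_2,e_3,e_4,e_1+e_2+e_3+e_4\}\subseteq\mathbb{F}_2^4$ with $u=e_1+e_2+e_3+e_4$, the sumset $S_0+S_1$ is the set of weight-$3$ vectors, and its stabilizer is trivial. Even when $H=\mathrm{Stab}(S_0+S_1)$ is nontrivial, you only get $S_1+H=S_1$; showing $S_0+H=S_0$ needs a further argument. Your stated fallback, the Davydov--Tombak doubling construction, is exactly the citation the paper relies on, so the proposal does not go beyond it.

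\textbf{Defect in the statement that your Step~1 exposes.} For $S=\mathbb{F}_2^m\setminus\langle u\rangle^\perp$ you correctly find $\Per(S)=\langle u\rangle^\perp$. That set has $2^{m-1}$ elements, not $2^j=2^{m-2}$. Hyperplane complements are maximal caps of size $2^{m-1}\ge 2^{m-2}+1$, yet they admit no $j$ with both $|\Per(S)|=2^j$ and $|S|=2^{m-2}+2^j$. The correct form is: $|S|=2^{m-2}+2^j$; $|\Per(S)|=2^j$ when $j\le m-4$; and $S$ is a hyperplane complement with $|\Per(S)|=2^{m-1}$ when $j=m-2$. This is also how the paper uses the result later, where it takes $\Per(T)=\langle v\rangle^\perp$ of dimension $m-1$.

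The same issue undermines your Step~2 claim $\Per(S)\subseteq W$. If $S$ is a hyperplane complement and $u$ is not its defining normal, then $S_0$ and $S_1$ are cosets of $W=\langle u\rangle^\perp\cap\langle v\rangle^\perp$, but $\Per(S)=\langle v\rangle^\perp\supsetneq W$. Step~3's quotient by a period of size $2^j$ must therefore explicitly exclude this case.
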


\begin{remark}
Bruen and Wehlau work~\cite{bruen1999long} with projective space notation $\PG(n,2)$, which corresponds to $\mathbb{F}_2^{n+1}$ in our vector space notation. Therefore, their dimension parameter $n$ corresponds to our $m-1$. We have adapted their results to match our notation for consistency.
\end{remark}

\subsection{Connection Between Caps and LCD Codes}\label{sec:cap-lcd}

For any subset $S = \{s_1, s_2, \ldots, s_n\} \subseteq \mathbb{F}_2^m$, we define the matrix $H_S = (s_1, s_2, \ldots, s_n) \in \mathbb{F}_2^{m \times n}$.
Note that $H_S$ is determined up to column permutations. When $\rank(H_S) = m$, we can use $H_S$ as a parity-check matrix to define an $[n, n-m]$ code $C_S$, which is determined up to coordinate permutations.
The following folklore results establish the fundamental connection between caps and LCD codes, whose proofs are included for completeness.

\begin{proposition}\label{prop:lcd-cap}
Every $[n,k,d]$ code $C$ with $d \geq 4$ can be expressed as $C_S$ for some cap $S \subseteq \mathbb{F}_2^{n-k}$ with $\lvert S \rvert = n$, up to coordinate permutations.
\end{proposition}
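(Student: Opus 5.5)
Set $m = n-k$ and take a parity-check matrix $H \in \mathbb{F}_2^{m\times n}$ of $C$. Since $C$ is an $[n,k]$ code, its dual $C^\perp$ has dimension $m$, so $H$ has rank $m$ and $C = \{x \in \mathbb{F}_2^n : Hx = 0\}$. Let $h_1,\dots,h_n \in \mathbb{F}_2^m$ be the columns of $H$ and put $S = \{h_1,\dots,h_n\}$. We will show that $S$ is a cap with $\lvert S\rvert = n$, and that $H$ equals $H_S$ up to a column permutation. Then $C = C_S$ up to coordinate permutations, since permuting the columns of a parity-check matrix permutes the coordinates of the code.

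The key step is the standard link between minimum distance and dependencies among columns. A nonzero codeword $x$ with support $T \subseteq \{1,\dots,n\}$ is exactly a relation $\sum_{i \in T} h_i = 0$; over $\mathbb{F}_2$ every coefficient is $1$. Since $d \geq 4$, there is no such relation with $1 \le \lvert T\rvert \le 3$. We read this off case by case.
\begin{itemize}
\item With $\lvert T\rvert = 1$: no column is zero.
\item With $\lvert T\rvert = 2$: $h_i + h_j \neq 0$ for $i \neq j$, so the columns are pairwise distinct. Hence $\lvert S\rvert = n$, and $H$ is $H_S$ up to column order.
\item With $\lvert T\rvert = 3$: $h_i + h_j + h_l \neq 0$ for distinct $i,j,l$.
\end{itemize}

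It remains to check that every subset of $S$ of size at most $3$ is linearly independent. A nontrivial vanishing $\mathbb{F}_2$-combination of at most three distinct elements of $S$ is a sum over some nonempty subset of them. By the three cases above, such a sum cannot vanish. Therefore $S$ is a cap.

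There is no real obstacle in this argument. The only points needing care are that $H$ has full rank $m$, so that $C_S$ is defined and equals $C$, and that distinctness of the columns (from $d \ge 3$) is what makes $S$ a set of exactly $n$ points.
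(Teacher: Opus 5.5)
Your proposal is correct and takes the same approach as the paper: use the columns of a parity-check matrix as $S$, and use $d \geq 4$ to get that any three of them are linearly independent. You are more careful than the paper's short proof, since you explicitly note that $d \ge 3$ forces the columns to be distinct (so $\lvert S\rvert = n$ and $H$ is $H_S$ up to column order) and that $H$ has full rank $m$, so $C_S$ is defined.
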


\begin{proof}
Let $C$ be an $[n,k,d]$ code with $d \geq 4$ and parity-check matrix $H \in \mathbb{F}_2^{(n-k) \times n}$. Let $S$ be the set of columns of $H$. Since $d \geq 4$, any three columns of $H$ are linearly independent, which means $S$ is a cap. Moreover, by construction, $C$ equals $C_S$ up to coordinate permutations.
\end{proof}

\begin{proposition}\label{prop:cap-to-code}
Let $S \subseteq \mathbb{F}_2^m$ be a cap with $\lvert S \rvert = n$ and $\langle S \rangle = \mathbb{F}_2^m$. Then $\rank(H_S) = m$ and $C_S$ is an $[n, n-m]$ code with minimum distance at least $4$.
\end{proposition}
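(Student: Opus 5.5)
The argument has two parts: the rank claim, and the minimum distance bound.

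\emph{Rank.} The column space of $H_S$ is exactly the span of its columns, namely $\langle S \rangle$. By hypothesis $\langle S \rangle = \mathbb{F}_2^m$, so $\rank(H_S) = m$. Hence $H_S$ is a full-rank $m \times n$ matrix, and its null space $C_S$ has dimension $n - m$. This makes $C_S$ an $[n, n-m]$ code.

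\emph{Minimum distance.} We show that $C_S$ has no nonzero codeword of weight at most $3$. Suppose $c \in C_S$ is nonzero with support $T \subseteq \{1,\ldots,n\}$ and $1 \le |T| \le 3$. Since $c$ is a binary vector, the condition $H_S c = 0$ reads
\[
\sum_{i \in T} s_i = 0.
\]
This is a nontrivial linear dependence among the at most three distinct vectors $\{s_i : i \in T\} \subseteq S$. That contradicts the definition of a cap, under which every subset of $S$ of size at most $3$ is linearly independent. So every nonzero codeword has weight at least $4$, and $d(C_S) \ge 4$.

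The dependence relation has the right form because the $s_i$ are distinct elements of the set $S$, so the vectors indexed by $T$ form a genuine subset of $S$ of the same size. There is no real obstacle here beyond this point. The only other care needed is the implicit requirement $n - m \ge 1$, so that $C_S \neq \{0\}$ and $C_S$ is a code in the paper's sense. I would note this as implicit in the statement (for $n = m$ the code is trivial) rather than try to prove it.
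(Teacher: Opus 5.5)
Your proof is correct and follows the paper's argument. The rank comes from $\langle S\rangle=\mathbb{F}_2^m$, and $d\ge 4$ comes from the linear independence of any at most three columns of $H_S$, which you spell out in more detail than the paper; your remark that the edge case $n=m$ (e.g.\ $S=\{e_1,\ldots,e_m\}$) gives $C_S=\{0\}$ is a fair point the paper leaves implicit.
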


\begin{proof}
Since $\langle S \rangle = \mathbb{F}_2^m$, we have $\rank(H_S) = m$. Since $S$ is a cap, any three columns of $H_S$ are linearly independent, which implies that the minimum distance of $C_S$ is at least $4$.
\end{proof}

\section{The Gram Matrix and Large Caps}\label{sec:gram}

In this section, we develop the theory of Gram matrices associated with caps. We first establish key properties of the matrix $U_S$,
particularly focusing on how the period structure affects its rank. These structural results enable us to prove that sufficiently large caps must be contained in a hyperplane complement.

\subsection{Properties of the Gram Matrix}

\begin{definition}
For a cap $S \subseteq \mathbb{F}_2^m$, define the matrix
\[
U_S := H_S H_S^T = \sum_{s \in S} ss^T \in \mathbb{F}_2^{m \times m}.
\]
We call this matrix the Gram matrix of a cap $S$.
\end{definition}
As $H_S$ is determined up to column permutations, the matrix $U_S$ is uniquely determined by $S$. 

\begin{proposition}\label{prop:lcd-nonsingular}
Let $S \subseteq \mathbb{F}_2^m$ be a cap.
Then the code $C_S$ is LCD if and only if $U_S$ is nonsingular.
\end{proposition}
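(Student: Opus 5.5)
The natural route is to transfer the LCD property from $C_S$ to its dual and then apply \cref{lem:hull-dim} there. Throughout, we assume $C_S$ is defined, i.e.\ $\rank(H_S)=m$, as in \cref{sec:cap-lcd}. Under this assumption $H_S$ is a parity-check matrix of $C_S$, so $H_S$ is a generator matrix of $C_S^\perp$, which has dimension $m$.

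First, I will show that $C$ is LCD if and only if $C^\perp$ is LCD. This uses only the identities $(C^\perp)^\perp = C$ and $\mathrm{Hull}(C^\perp) = C^\perp \cap (C^\perp)^\perp = C^\perp \cap C = \mathrm{Hull}(C)$. In particular, $\mathrm{Hull}(C_S)=\{0\}$ exactly when $\mathrm{Hull}(C_S^\perp)=\{0\}$.

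Second, I will apply \cref{lem:hull-dim} to the $[n,m]$ code $C_S^\perp$ with generator matrix $G=H_S$. This gives $\dim\mathrm{Hull}(C_S^\perp)=m-\rank(H_SH_S^T)=m-\rank(U_S)$. Hence $C_S^\perp$, and therefore $C_S$, is LCD if and only if the $m\times m$ matrix $U_S$ has full rank, that is, is nonsingular.

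Finally, the identity $H_SH_S^T=\sum_{s\in S} ss^T$ is the column–row expansion of a matrix product, so the condition can equally be stated in terms of $U_S$ as defined. There is no real obstacle here. The only point needing care is the implicit hypothesis that $\rank(H_S)=m$, which is required for $C_S$ to be defined and for $H_S$ to be a genuine generator matrix of $C_S^\perp$; I would state this assumption explicitly. The cap property itself plays no role in this proof.
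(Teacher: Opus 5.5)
Your proposal is correct and follows essentially the same route as the paper: view $H_S$ as a generator matrix of $C_S^\perp$, apply \cref{lem:hull-dim} to get that $C_S^\perp$ is LCD iff $U_S=H_SH_S^T$ is nonsingular, and use $\mathrm{Hull}(C_S^\perp)=\mathrm{Hull}(C_S)$. Your explicit statement of the hypothesis $\rank(H_S)=m$ and of why a code and its dual have the same hull are details the paper leaves implicit.
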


\begin{proof}
Since $C_S$ has parity-check matrix $H_S$, its dual $C_S^\perp$ has generator matrix $H_S$. By \cref{lem:hull-dim}, $C_S^\perp$ is LCD if and only if $H_S H_S^T = U_S$ is nonsingular. The result follows since a code is LCD if and only if its dual is LCD.
\end{proof}

We begin by establishing fundamental bounds on the rank of $U_S$ based on the period structure of $S$. These bounds are crucial for understanding when LCD codes can exist.

\begin{lemma}[Rank Bounds for Cap Matrices]\label{lem:rank-bounds}
Let $S \subseteq \mathbb{F}_2^m$ be a cap. Then the following hold:
\begin{enumerate}[label=\textup{(\arabic*)}]
\item $\rank(U_S) \leq \lvert S \rvert$.
\item If $w \in \Per(S)$, $w \neq 0$,
then the linear map $U_S$ sends $\gen{w}^{\perp}$ into $\gen{w}$. In particular, $\rank(U_S) \leq 2$ if $\Per(S)\neq \{0\}$.
\item If $|\Per(S)| \geq 8$, then $U_S = 0$.
\item Assume $m \geq 5$, $\lvert S \rvert=2^{m-2}+1$, and there exists hyperplane $H$ such that $\lvert S\cap H\rvert=1$. Then $\rank(U_S)\leq 2$.
\end{enumerate}
\end{lemma}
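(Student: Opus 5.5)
Parts (1)--(3) follow from writing $U_S$ as a sum of rank-one matrices and pairing or grouping the terms; part (4) reduces to part (3) by comparing $S$ with a reference set that is an affine coset.

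\textbf{Part (1).} This is immediate. $U_S=\sum_{s\in S}ss^T$ is a sum of $\lvert S\rvert$ matrices of rank at most $1$, and rank is subadditive.

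\textbf{Part (2).} Fix a nonzero $w\in\Per(S)$. Since $s+w\in S$ and $s+w\neq s$, the set $S$ is partitioned into pairs $\{s,s+w\}$. For $x\in\gen{w}^\perp$ we have $(s+w)^Tx=s^Tx$. Hence each pair contributes
\[
s(s^Tx)+(s+w)(s^Tx)=w\,(s^Tx)
\]
to $U_Sx$, so $U_Sx\in\gen{w}$. Since $\gen{w}^\perp$ has codimension $1$, the image of $U_S$ is spanned by $\gen{w}$ together with the image of a single vector outside $\gen{w}^\perp$. Therefore $\rank(U_S)\le 2$.

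\textbf{Part (3).} Let $P=\Per(S)$ with $\dim P\ge 3$. Then $S$ is a union of cosets $a+P$, and for each coset
\[
\sum_{p\in P}(a+p)(a+p)^T=\lvert P\rvert\,aa^T+a\,\sigma^T+\sigma\,a^T+\sum_{p\in P}pp^T,
\qquad \sigma=\sum_{p\in P}p .
\]
We check that each term vanishes over $\mathbb{F}_2$.
\begin{itemize}[label=--]
\item $\lvert P\rvert$ is even, so the first term is zero.
\item $\sigma=0$ because $\dim P\ge 2$.
\item The $(i,j)$ entry of $\sum_{p}pp^T$ counts the $p\in P$ with $p_i=p_j=1$. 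This set is empty or has size $2^{\dim P-1}$ or $2^{\dim P-2}$, because it is cut out by at most two affine conditions on $P$. Each of these numbers is even once $\dim P\ge 3$.
\end{itemize}
Hence every coset contributes $0$, and $U_S=0$.

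\textbf{Part (4), setup.} Write $H=\gen{v}^\perp$, $S\cap H=\{h\}$, $A=\mathbb{F}_2^m\setminus H$ and $T=S\setminus\{h\}\subseteq A$, so $\lvert T\rvert=2^{m-2}=\lvert A\rvert/2$. For $t\in T$, the point $t+h$ lies in $A$. It is not in $T$, since otherwise $h$, $t$ and $t+h$ would be three dependent points of $S$. Thus $T\cap(T+h)=\emptyset$, and by counting $A=T\sqcup(T+h)$. In other words, $T$ is a system of representatives of the pairs $\{a,a+h\}$ partitioning $A$.

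\textbf{Part (4), reference set.} Choose a hyperplane $W$ of $H$ with $h\notin W$; this is possible since $h\neq0$. Pick $a\in A$ and set $T_0=a+W$. Then $T_0$ is also a system of representatives of these pairs. Moreover $T_0$ is a coset of a subspace of dimension $m-2\ge 3$, using $m\ge 5$. The computation from part (3), with $P$ replaced by $W$, gives $U_{T_0}=0$.

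\textbf{Part (4), comparison.} Passing from $T_0$ to $T$ means replacing some representatives $b$ by $b+h$. Each such swap changes $U$ by
\[
(b+h)(b+h)^T-bb^T=hb^T+bh^T+hh^T .
\]
Summing over all swaps gives
\[
U_T=hy^T+yh^T+\varepsilon\,hh^T
\]
for some $y\in\mathbb{F}_2^m$ and $\varepsilon\in\mathbb{F}_2$. Therefore $U_S=U_T+hh^T$ has column space contained in $\gen{h,y}$, and $\rank(U_S)\le 2$.

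\textbf{Main obstacle.} The key step is part (4), specifically seeing that the cap condition forces $T$ to be a transversal of the translation-by-$h$ pairing of $A$. Once that is established, the comparison with a coset transversal reduces everything to the part (3) computation.
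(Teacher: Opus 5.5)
Your proof is correct, and parts (1) and (2) are exactly the paper's arguments. For (3) you take a different route. You split $S$ into cosets of $P=\Per(S)$ and check by parity counting that $U_{a+P}=0$ whenever $\dim P\ge 3$. The paper instead derives (3) from (2) in two lines. For any $x$, the subspace $\Per(S)\cap\gen{x}^{\perp}$ has dimension at least $2$, so it contains independent $a,b$. Then $x\in\gen{a}^{\perp}\cap\gen{b}^{\perp}$ gives $U_Sx\in\gen{a}\cap\gen{b}=\{0\}$. The paper's argument is shorter and coordinate-free, and it explains the threshold $8$ as the need for two independent periods inside every hyperplane. Yours is self-contained and shows exactly where $\dim P\ge 3$ is used, namely in the counts $2^{\dim P-2}$. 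It also gives directly the coset statement you reuse in (4).

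For (4) your argument is essentially the paper's in different language. The paper (writing $v$ for your $h$) takes a second hyperplane $H'$ and splits $S\setminus H$ into $S_1=(S\setminus H)\cap H'$ and $S_2=(S\setminus H)\setminus H'$. It then uses the cap property to get $H'\setminus H=S_1\sqcup(v+S_2)$. Here $H'\setminus H$ is your reference transversal $T_0$ with $W=H\cap H'$, and $v+S_2$ is the set of representatives you swap. For $v+S_2\subseteq H'\setminus H$ to hold, the linear hyperplane $H'$ must avoid $v$, which is exactly your condition $h\notin W$. The only difference is the last step. The paper uses $U_{H'\setminus H}=0$ to write $U_S=U_{\{v\}}+U_{S_2}+U_{v+S_2}$ and applies (2) to $S_2\sqcup(v+S_2)$, whose period contains $v$. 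You instead compute the correction explicitly as $U_S=hy^T+yh^T+(1+\varepsilon)hh^T$, which also places the image inside $\gen{h,y}$.
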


\begin{proof}
(1) Since $U_S = \sum_{s \in S} ss^T$ and $\rank(ss^T) \leq 1$ for any vector $s$, we have
\[
\rank(U_S) \leq \sum_{s \in S} \rank(ss^T) \leq \lvert S \rvert.
\]

(2) We can partition $S$ as $S = T \sqcup (T + w)$ for some subset $T \subseteq S$. For any $u \in \langle w \rangle^\perp$, we have
\begin{align*}
uU_S &= \sum_{v \in S} \langle u, v \rangle v \\
&= \sum_{v \in T} \langle u, v \rangle v + \sum_{v \in T} \langle u, v + w \rangle (v + w) \\
&= \sum_{v \in T} \langle u, v \rangle w \in \langle w \rangle.
\end{align*}
This proves the first assertion. For the second, choose $v \notin \langle w \rangle^\perp$. Then $\operatorname{Im}(U_S) = \langle vU_S \rangle + \langle w \rangle^\perp U_S \subseteq \langle vU_S, w \rangle$, so $\rank(U_S) \leq 2$.

(3) Let $v \in \mathbb{F}_2^m$ be arbitrary. Since $\Per(S) \cap \langle v \rangle^\perp$ has dimension at least $2$, we can choose linearly independent $a, b \in \Per(S) \cap \langle v \rangle^\perp$. From (2), for $w \in \{a, b\}$, $U_S$ maps $\langle w \rangle^\perp$ to $\langle w \rangle$. Therefore, $vU_S \in \langle a \rangle \cap \langle b \rangle = \{0\}$. Since $v$ was arbitrary, $U_S = 0$.

(4) Let $v$ be the unique element of $S\cap H$. Choose a hyperplane $H'$ containing $v$, and fix it for the rest of the proof.
Let $S_1=(S\setminus H)\cap H'$ and $S_2=(S\setminus H)\setminus H'$. By definition, $S=\{v\}\sqcup S_1\sqcup S_2$, and in particular $\lvert S_1\rvert +\lvert S_2\rvert=2^{m-2}$.

Since $S$ is a cap, $v+S_2$ is disjoint from $S_1$. On the other hand, $S_1, v+S_2 \subset H'\setminus H$ and $\lvert H'\setminus H \rvert=2^{m-2}=\lvert S_1\rvert +\lvert S_2\rvert$. Therefore, $H'\setminus H = S_1 \sqcup (v+S_2)$.

Since $\lvert \Per(H'\setminus H) \rvert= \lvert H\cap H' \rvert \geq 8$, we have $U_{H'\setminus H}=0$ by (3). Therefore,
\[U_S=U_S+U_{H'\setminus H}=U_{\{v\}}+U_{S_2}+U_{S_1}+U_{H'\setminus H}= U_{\{v\}}+U_{S_2}+U_{v+S_2}.\]
Since $v \in \Per((v+S_2)\sqcup S_2)$, the linear map $U_{S_2}+U_{v+S_2}$ sends $\gen{v}^{\perp}$ into $\gen{v}$ by (2). The map $U_{\{v\}}$ also sends $\gen{v}^{\perp}$ into $\gen{v}$, so $U_S$ has the same property and we conclude $\rank(U_S)\leq 2$ as in (2).
\end{proof}

\begin{lemma}[Even Code Characterization]\label{lem:even-code}
Let $S \subseteq \mathbb{F}_2^m$ be a cap with $\langle S \rangle = \mathbb{F}_2^m$.
Then $C_S$ is an even code if and only if $S \cap \langle v \rangle^\perp = \emptyset$ for some $v \in \mathbb{F}_2^m$.

\end{lemma}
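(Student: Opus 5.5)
The plan is to use the standard fact that a binary code is even if and only if the all-ones vector $\mathbf{1}$ lies in its dual. Since $C_S$ has parity-check matrix $H_S$ and $\rank(H_S)=m$, the dual $C_S^\perp$ is exactly the row space of $H_S$, namely $\{vH_S : v \in \mathbb{F}_2^m\}$. So $C_S$ is even if and only if $\mathbf{1} = vH_S$ for some $v \in \mathbb{F}_2^m$.

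Next I unpack the condition $vH_S=\mathbf{1}$ coordinatewise. The $i$-th entry of $vH_S$ is $\langle v, s_i\rangle$, so $vH_S = \mathbf{1}$ says precisely that $\langle v, s\rangle = 1$ for every $s\in S$. This is the same as $S \cap \langle v\rangle^\perp = \emptyset$. Both implications follow directly, since each step above is an equivalence.

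To justify the opening fact: if $\mathbf{1}\in C_S^\perp$, then $\wt(c)\equiv\langle c,\mathbf{1}\rangle = 0 \pmod 2$ for every $c\in C_S$, so $C_S$ is even. Conversely, if every codeword has even weight, then $\langle c,\mathbf{1}\rangle=0$ for all $c\in C_S$, so $\mathbf{1}\in C_S^\perp$ because $(C_S^\perp)^\perp = C_S$. The cap hypothesis is not needed beyond guaranteeing that $C_S$ is well defined (via \cref{prop:cap-to-code}). The hypothesis $\langle S\rangle=\mathbb{F}_2^m$ is what identifies $C_S^\perp$ with the row space of $H_S$.

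There is one degenerate case to check. For $v=0$ we have $\langle v\rangle^\perp=\mathbb{F}_2^m$, so the intersection $S\cap\langle v\rangle^\perp$ is nonempty whenever $S\neq\emptyset$. Hence any witness $v$ is automatically nonzero, and the statement is consistent as written. I expect no real obstacle here; the only care needed is to state that $C_S^\perp$ is the row space of $H_S$.
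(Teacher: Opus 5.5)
Your proposal is correct and matches the paper's proof: evenness of $C_S$ is equivalent to $\mathbf{1}\in C_S^\perp=\{vH_S : v\in\mathbb{F}_2^m\}$, and $vH_S=\mathbf{1}$ unpacks to $\langle v,s\rangle=1$ for all $s\in S$, i.e.\ $S\cap\langle v\rangle^\perp=\emptyset$. Your added remarks (the cap hypothesis is inessential, and $v=0$ can never be a witness) are accurate; one small slip is that in the converse of the opening fact, $\mathbf{1}\in C_S^\perp$ follows directly from the definition of the dual, with no need for $(C_S^\perp)^\perp=C_S$.
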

\begin{proof}
Over $\mathbb{F}_2$, the Hamming weight satisfies $\wt(w) = \langle w, \mathbf{1} \rangle$ for every $w \in \mathbb{F}_2^n$.
Therefore, $C_S$ is even if and only if $\mathbf{1} \in C_S^\perp$.
Since $H_S$ is a generator matrix of $C_S^\perp$, the condition $\mathbf{1} \in C_S^\perp$ holds if and only if $\mathbf{1} = vH_S$ for some $v \in \mathbb{F}_2^m$.
By definition of $H_S$, this is equivalent to $\langle v, s \rangle = 1$ for every $s \in S$, which means $S \cap \langle v \rangle^\perp = \emptyset$.
\end{proof}

\subsection{Structure Theorem}

\begin{theorem}[Structure of Large Caps]\label{thm:main}
Let $m \geq 7$.
Any cap $S$ with cardinality $\lvert S \rvert \geq 2^{m-2} + 2^{m-4} - \rank(U_S) + 1$ must be contained in a maximal cap of the form $\mathbb{F}_2^m \setminus \langle v \rangle^\perp$ for some $v \in \mathbb{F}_2^m$.
Furthermore, $\lvert S \rvert \le 2^{m-1} - \rank(U_S)$.
\end{theorem}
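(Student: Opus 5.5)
Let $r=\rank(U_S)$. We may assume $r\ge 1$: if $r=0$ the hypothesis reads $|S|\ge 2^{m-2}+2^{m-4}+1$, and this case is handled by the same argument. Extend $S$ to a maximal cap $M\supseteq S$. Since $|M|\ge|S|\ge 2^{m-2}+1$, \cref{thm:bw} applies. It gives $|M|=2^{m-2}+2^j$ and $|\Per(M)|=2^j$ for some $j\in\{0,\dots,m-4\}\cup\{m-2\}$. The goal is to rule out $j\le m-4$. Once that is done, $M=\mathbb{F}_2^m\setminus\gen{v}^\perp$ by the second part of \cref{thm:bw}, which proves the first assertion.

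To exclude $j\le m-4$, compare $U_S$ with $U_M$. We have
\[
U_S = U_M + \sum_{s\in M\setminus S} ss^T ,
\]
so $r\le \rank(U_M)+|M\setminus S|$. The second term is $|M\setminus S| = |M|-|S|$.

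Case $j\ge 3$. Here \cref{lem:rank-bounds}(3) gives $U_M=0$. Hence $r\le |M|-|S|\le 2^{m-2}+2^{m-4}-|S|$, i.e. $|S|\le 2^{m-2}+2^{m-4}-r$, contradicting the hypothesis.

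Case $j\in\{1,2\}$. Now $\rank(U_M)\le 2$ by \cref{lem:rank-bounds}(2). This only gives $|S|\le |M|+2-r\le 2^{m-2}+6-r$. That still contradicts the hypothesis, because $2^{m-4}\ge 8>6$ when $m\ge7$.

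Case $j=0$. Then $|M|=2^{m-2}+1$ and there is no period to exploit, so this is the main obstacle. I would try to use the structure theory of Bruen--Wehlau for caps of size $2^{m-2}+1$ to produce a hyperplane $H$ with $|M\cap H|=1$. \cref{lem:rank-bounds}(4) is designed exactly for this situation and then gives $\rank(U_M)\le 2$. If such a hyperplane cannot be extracted, the fallback is the trivial bound: $|M|-|S|\ge r-\rank(U_M)$ still forces $|S|\le 2^{m-2}+1$, while the hypothesis requires $|S|\ge 2^{m-2}+2^{m-4}-r+1$. So when $r\le 2^{m-4}$ the $j=0$ case is ruled out by size alone. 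For $r$ larger than $2^{m-4}$ one needs $\rank(U_M)$ small, via (4), to conclude $r\le 2+|M\setminus S|\le 2+1$, a contradiction. This step is where I expect most care to be needed.

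For the second assertion, take $M=\mathbb{F}_2^m\setminus\gen{v}^\perp$. Its period is $\gen{v}^\perp\cap\{\text{translations preserving }M\}$, which has size $2^{m-1}\ge 8$, so $U_M=0$ by \cref{lem:rank-bounds}(3). Then the same decomposition gives $r\le |M\setminus S| = 2^{m-1}-|S|$, i.e. $|S|\le 2^{m-1}-\rank(U_S)$.
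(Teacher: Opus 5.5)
Your strategy is the paper's. You extend $S$ to a maximal cap $M$, apply \cref{thm:bw}, use $U_S=U_M+U_{M\setminus S}$ with rank subadditivity to exclude $j\le m-4$, and finish with $U_M=0$ for $M=\mathbb{F}_2^m\setminus\gen{v}^\perp$. Your cases $j\in\{1,2\}$ and $3\le j\le m-4$ are correct, and so is your proof of the second assertion.

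The gap is the case $j=0$. You leave the range $r>2^{m-4}$ open, planning to extract from Bruen--Wehlau a hyperplane $H$ with $|M\cap H|=1$ so that \cref{lem:rank-bounds}(4) applies. That hyperplane is never established. The bound $|M\setminus S|\le 1$ you then use also does not follow from the hypothesis in that range; the hypothesis only gives $|M\setminus S|\le r-2^{m-4}$. As written, the proof of the first assertion is incomplete.

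The missing observation is elementary: $U_S$ is an $m\times m$ matrix, so $r\le m$, and $m\le 2^{m-4}-1$ for all $m\ge 7$.
\begin{itemize}
\item The hypothesis then gives $|S|\ge 2^{m-2}+2^{m-4}-m+1\ge 2^{m-2}+2$. So $|M|\ge 2^{m-2}+2$ and $j=0$ is excluded by size alone. This is exactly how the paper opens its proof, and it makes \cref{lem:rank-bounds}(4) unnecessary here. The paper needs (4) only for $m=6$ in \cref{thm:nonexist_small}, where this counting fails.
\item The same bound is what justifies your opening claim $|M|\ge|S|\ge 2^{m-2}+1$, without which \cref{thm:bw} cannot be invoked at all. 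A maximal cap need not be large, and that claim does not follow from the hypothesis when $r>2^{m-4}$, the very range you go on to consider.
\item Your size-only threshold should be $r<2^{m-4}$ rather than $r\le 2^{m-4}$. At $r=2^{m-4}$ the hypothesis only gives $|S|\ge 2^{m-2}+1=|M|$, which is no contradiction.
\end{itemize}
With $r\le m<2^{m-4}$ inserted at the start, your argument is complete and coincides with the paper's.
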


\begin{proof}
Let $T$ be a maximal cap containing $S$.
Since $\lvert S \rvert \geq 2^{m-2} + 2^{m-4} - \rank(U_S) + 1$ and $m \geq 7$, we have
\begin{align*}
\lvert T \rvert \geq \lvert S \rvert
&\geq 2^{m-2} + 2^{m-4} - \rank(U_S) + 1 \\
&\geq 2^{m-2} + 2^{m-4} - m + 1 \\
&\geq 2^{m-2} + 2.
\end{align*}

By \cref{thm:bw}, $\lvert T \rvert = 2^{m-2} + 2^j$ for some $j \in \{ 1,2, \ldots, m-4, m-2\}$.
Let $X = T \setminus S$. Since $T = S \sqcup X$ (disjoint union), we have $U_T = U_S + U_X$. Since we are working over $\mathbb{F}_2$, this gives $U_S = U_T + U_X$.
By combining this with \cref{lem:rank-bounds}(1),
\[
\rank(U_T) + |X| \geq \rank(U_T) + \rank(U_X) \geq \rank(U_S).
\]
Therefore, we have
\begin{align*}
\rank(U_T)
&\geq \rank(U_S) - |X|\\
&= \rank(U_S) - (\lvert T \rvert - \lvert S \rvert)\\
&\geq \rank(U_S) - ((2^{m-2} + 2^j) - (2^{m-2} + 2^{m-4} - \rank(U_S) + 1))\\
&= 2^{m-4} - 2^j + 1.
\end{align*}
Now we analyze the possible values of $j$:

(i) If $j=1$ or $2$, then by \cref{lem:rank-bounds}(2), $\rank(U_T) \leq 2$.
However, the above inequality gives
\[
2 \ge \rank(U_T) \ge 2^{m-4} - 2^j + 1 \ge 2^{m-4} - 3,
\]
which cannot be satisfied for $m \ge 7$.
Thus, this case is impossible.

(ii) If $j \geq 3$, then by \cref{lem:rank-bounds}(3), $\rank(U_T) = 0$.
Suppose for contradiction that $j \le m-4$.
Then the above inequality gives
\[
0 \ge \rank(U_T) \ge 2^{m-4} - 2^j + 1 \ge 1,
\]
which is absurd.
Therefore, we must have $j = m-2$, which means $T = \mathbb{F}_2^m \setminus \langle v \rangle^\perp$ for some $v \in \mathbb{F}_2^m$ by \cref{thm:bw}.

Finally, we will show that $\lvert S \rvert \le 2^{m-1} - \rank(U_S)$. 
Since $\Per(T)=\gen{v}^{\perp}$ has dimension $m-1 \geq 6$, we have $U_T=0$ by \cref{lem:rank-bounds}(3).
Therefore, we have
\[U_S + U_X = U_T = 0,\]
giving $U_S = U_X$.
Applying this to \cref{lem:rank-bounds}(1), we have $|X| \ge \rank(U_X) = \rank(U_S)$.
Therefore, we have
\[
\lvert S \rvert = \lvert T \rvert - |X| \le 2^{m-1} - \rank(U_S),
\]
as was to be shown.
This completes the proof.

\end{proof}

\section{Determination of $d_2^E(n,n-m)$}\label{sec:det}

We now apply the results of the previous section to completely determine the values of $d_2^E(n,n-m)$ for codimensions $6$, $7$, and $8$.

\subsection{Nonexistence Theorems}

\begin{theorem}[Nonexistence of LCD Codes with Odd Length-Codimension Difference]\label{thm:nonexist}
Let $m \geq 7$ and $n \geq 2^{m-2} + 2^{m-4} - m + 1$.
Any LCD $[n,n-m,d]$ code with $d \geq 4$ must satisfy $n \equiv m \pmod{2}$ and $n \le 2^{m-1} - m$.
\end{theorem}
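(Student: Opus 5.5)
By \cref{prop:lcd-cap}, we may write the given LCD $[n,n-m,d]$ code $C$ with $d\ge 4$ as $C=C_S$ for a cap $S\subseteq\mathbb{F}_2^m$ with $\lvert S\rvert=n$. Since $H_S$ is a parity-check matrix of rank $m$, we have $\langle S\rangle=\mathbb{F}_2^m$. Because $C_S$ is LCD, \cref{prop:lcd-nonsingular} says $U_S$ is nonsingular, so $\rank(U_S)=m$. The hypothesis $n\ge 2^{m-2}+2^{m-4}-m+1$ then reads exactly
\[
\lvert S\rvert \ge 2^{m-2}+2^{m-4}-\rank(U_S)+1.
\]
This is the hypothesis of \cref{thm:main} (recall $m\ge 7$). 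Applying \cref{thm:main} gives two things. First, $S\subseteq \mathbb{F}_2^m\setminus\langle v\rangle^\perp$ for some nonzero $v$. Second, $n=\lvert S\rvert\le 2^{m-1}-\rank(U_S)=2^{m-1}-m$, which is the length bound.

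For the parity statement, the first consequence says $S\cap\langle v\rangle^\perp=\emptyset$. By \cref{lem:even-code} (using $\langle S\rangle=\mathbb{F}_2^m$), $C_S$ is an even code. It is also LCD, so \cref{lem:even-lcd} forces its dimension $n-m$ to be even. Hence $n\equiv m\pmod 2$.

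There is no real obstacle, since all the work sits in \cref{thm:main}. The one point to check carefully is that LCD gives full rank $\rank(U_S)=m$ exactly. That equality is what turns the length threshold in the statement into the threshold of \cref{thm:main}, and it is also what produces the constant $m$ in the bound $2^{m-1}-m$.
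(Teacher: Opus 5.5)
Your proposal is correct and follows essentially the same route as the paper. You pass to a cap $S$ with $\rank(U_S)=m$ so that \cref{thm:main} applies; its first conclusion gives evenness via \cref{lem:even-code} and then parity via \cref{lem:even-lcd}, and its second conclusion gives the length bound. You are slightly more explicit than the paper in checking that $\langle S\rangle=\mathbb{F}_2^m$, which \cref{lem:even-code} requires.
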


\begin{proof}
Since $C$ has minimum distance $d \geq 4$, by \cref{prop:lcd-cap} and \cref{prop:lcd-nonsingular}, there exists a cap $S \subseteq \mathbb{F}_2^m$ with $\lvert S \rvert = n$ such that $C = C_S$ and $U_S$ is nonsingular.
By \cref{thm:main}, $S$ is contained in a maximal cap $T$ of the form $T = \mathbb{F}_2^m \setminus \langle v \rangle^\perp$ for some $v \in \mathbb{F}_2^m$.
Here we used the fact that $\rank(U_S) = m$, since $C_S$ is LCD.
\cref{lem:even-code} shows that $C_S$ is an even code.
Therefore, we have proved that $C_S$ is an even LCD code.
By \cref{lem:even-lcd}, the dimension $n-m$ must be even.
This implies $n \equiv m \pmod{2}$, as required.
The latter assertion is immediate from \cref{thm:main}.
\end{proof}

The upper bound $n \leq 2^{m-1} - m$, and the condition $n\geq 2^{m-2} + 2^{m-4} - m + 1$ in \cref{thm:nonexist} are tight.
We construct two families of LCD codes with minimum distance $\geq 4$ that attain this bound.

\begin{proposition}\label{prop:construction_S1}
For $m \geq 7$, there exists an LCD $[2^{m-1}-m,\, 2^{m-1}-2m,\, \geq 4]$ code.
\end{proposition}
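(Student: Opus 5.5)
The plan is to build the cap as a complement inside the maximal cap $T=\mathbb{F}_2^m\setminus\langle v\rangle^\perp$, mirroring the equality case in the proof of \cref{thm:main}. Take $v=e_1$, so that $T=\{x\in\mathbb{F}_2^m : x_1=1\}$ is the affine hyperplane of all vectors with first coordinate $1$, and $|T|=2^{m-1}$.

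First, $T$ is a cap. Any two distinct elements of $T$ are nonzero and distinct, so they are linearly independent. The sum of any three elements of $T$ has first coordinate $1$, so it is nonzero. Hence no three elements of $T$ are dependent.

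Second, $\Per(T)=\langle v\rangle^\perp$ has dimension $m-1\geq 3$, so \cref{lem:rank-bounds}(3) gives $U_T=0$.

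Next, choose a subset $X\subseteq T$ of size $m$ with $U_X$ nonsingular. The natural choice is
\[
X=\{e_1,\ e_1+e_2,\ \ldots,\ e_1+e_m\}.
\]
The matrix $H_X$ whose columns are these vectors is invertible: subtracting the first column from the others yields the identity matrix. Hence $U_X=H_XH_X^T$ is nonsingular.

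Now set $S=T\setminus X$. Then $S$ is a cap, as a subset of a cap, and $|S|=2^{m-1}-m$. Since $T=S\sqcup X$ and $U_T=0$, we have $U_S=U_T+U_X=U_X$, so $U_S$ is nonsingular.

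It remains to check that $\langle S\rangle=\mathbb{F}_2^m$, so that \cref{prop:cap-to-code} applies. This follows at once from nonsingularity. We have $U_S=H_SH_S^T$, so $\rank(H_S)\geq\rank(U_S)=m$, and the columns of $H_S$, which are the elements of $S$, span $\mathbb{F}_2^m$. Alternatively, a proper subspace $W$ not contained in $\langle v\rangle^\perp$ meets $T$ in at most $2^{m-2}<2^{m-1}-m$ points, so $S$ cannot lie in one.

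Finally, \cref{prop:cap-to-code} shows that $C_S$ is a $[2^{m-1}-m,\,2^{m-1}-2m]$ code with minimum distance at least $4$. \cref{prop:lcd-nonsingular} shows that $C_S$ is LCD.

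The only genuinely nontrivial step is finding $X\subseteq T$ with $U_X$ nonsingular, and the explicit unitriangular choice above handles it. As a consistency check, $n-m=2^{m-1}-2m$ is even, which agrees with the parity condition of \cref{thm:nonexist}.
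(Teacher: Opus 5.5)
Your proof is correct and is essentially the paper's construction. The paper takes the odd-weight vectors (the complement of $\langle e_1+\cdots+e_m\rangle^\perp$) minus the standard basis, which gives $U_S=I_m$. Your set is exactly the image of that set under the invertible linear map $e_1\mapsto e_1$, $e_i\mapsto e_1+e_i$ ($i\ge 2$). Your explicit check that $\langle S\rangle=\mathbb{F}_2^m$, via $\rank(H_S)\ge\rank(U_S)=m$, also fills in a small step the paper leaves implicit.
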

\begin{proof}
Let $H = \gen{e_1+e_2+\cdots +e_m}^{\perp} \subset \mathbb{F}_2^m$, and let $S_1 = \mathbb{F}_2^m \setminus (H\cup \{e_1,e_2,\ldots,e_m\})$.
Since $S_1 \subset \mathbb{F}_2^m \setminus H$, the set $S_1$ is a cap, so $C_{S_1}$ is a $[2^{m-1}-m,2^{m-1}-2m,\geq 4]$ code, by \cref{prop:cap-to-code}. 
By definition, $U_{S_1}=U_{\mathbb{F}_2^m}-U_H-U_{\{e_1,\ldots, e_m\}}$. By \cref{lem:rank-bounds}(3), $U_{\mathbb{F}_2^m}=U_H=0$. Hence $U_{S_1}=U_{\{e_1,\ldots, e_m\}}=I_m$, which is nonsingular, so $C_{S_1}$ is LCD, by \cref{prop:lcd-nonsingular}.
(This construction coincides with that in~\cite[Theorem~11]{characterizations2024}.)
\end{proof}

\begin{proposition}\label{prop:construction_S2}
For $m \geq 7$, there exists an LCD $[2^{m-2}+2^{m-4}-m-1,\, 2^{m-2}+2^{m-4}-2m-1,\, \geq 4]$ code.
\end{proposition}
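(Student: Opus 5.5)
The plan is to realize the code as $C_S$ for a cap $S = T\setminus X$, where $T$ is a large cap whose Gram matrix vanishes and $X\subseteq T$ is a set of $m+1$ points with nonsingular Gram matrix. Then $U_S = U_T + U_X = U_X$ is nonsingular. The target length $2^{m-2}+2^{m-4}-m-1$ equals $\lvert T\rvert-(m+1)$ exactly when $\lvert T\rvert = 2^{m-2}+2^{m-4}$, i.e.\ for a cap of the type $j=m-4$ in \cref{thm:bw}. This explains why the bound in \cref{thm:nonexist} should be tight here.

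\textbf{The cap $T$ and the vanishing of $U_T$.} Write $\mathbb{F}_2^m=\mathbb{F}_2^4\times\mathbb{F}_2^{m-4}$, and let $Q=\{e_1,e_2,e_3,e_4,e_1+e_2+e_3+e_4\}\subseteq\mathbb{F}_2^4$ be the $5$-cap. Set $T=Q\times\mathbb{F}_2^{m-4}$, so $\lvert T\rvert = 5\cdot 2^{m-4}=2^{m-2}+2^{m-4}$.
\begin{itemize}
\item[] \emph{$T$ is a cap.} Suppose three distinct points $(c_i,u_i)\in T$ sum to zero. Then $c_1+c_2+c_3=0$. If two of the $c_i$ coincide, the third is $0\notin Q$. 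If all three coincide, their sum is $c_1\neq 0$. If they are pairwise distinct, this contradicts $Q$ being a cap. Nonzero points are trivially independent, as are two distinct points.
\item[] \emph{$U_T=0$.} Clearly $\{0\}\times\mathbb{F}_2^{m-4}\subseteq\Per(T)$, which has size $2^{m-4}\ge 8$ for $m\ge 7$. So \cref{lem:rank-bounds}(3) gives $U_T=0$.
\end{itemize}

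\textbf{The removed set $X$.} Let $f_1,\dots,f_{m-4}$ be the standard basis of $\mathbb{F}_2^{m-4}$ and take
\[
X=\{(e_1,f_k):1\le k\le m-4\}\cup\{(e_1,0),(e_2,0),(e_3,0),(e_4,0),(e_1+e_2+e_3+e_4,0)\},
\]
which consists of $m+1$ distinct points of $T$. Then $S=T\setminus X$ has the required size and is a cap. It spans $\mathbb{F}_2^m$: for instance it contains $(e_1,f_1+f_k)$, $(e_i,f_1)$ and similar points, and in any case a subset of size more than $2^{m-1}$ of $\mathbb{F}_2^m$ cannot lie in a hyperplane. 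So \cref{prop:cap-to-code} yields an $[n,n-m,\ge 4]$ code, and by \cref{prop:lcd-nonsingular} it remains to show that $U_X$ is nonsingular.

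\textbf{Nonsingularity of $U_X$ (the main computational step).} In block form, $U_X=\begin{pmatrix}A&B\\ B^T&D\end{pmatrix}$ with the following blocks:
\begin{itemize}
\item[] $D=\sum_k f_kf_k^T=I_{m-4}$;
\item[] $B=e_1\mathbf{1}^T$;
\item[] $A=(m-3)E_{11}+\mathrm{diag}(0,1,1,1)+J_4$, where $J_4$ is the all-ones matrix.
\end{itemize}
Since $D$ is invertible, $U_X$ is nonsingular if and only if the Schur complement $A-BD^{-1}B^T=A-(m-4)E_{11}$ is nonsingular. This complement equals $E_{11}+\mathrm{diag}(0,1,1,1)+J_4=I_4+J_4$. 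Over $\mathbb{F}_2$ we have $J_4^2=4J_4=0$, so $(I_4+J_4)^2=I_4$. Hence the complement is invertible, independently of the parity of $m$.

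The only delicate point is exactly this choice of $X$ making the parity-dependent $E_{11}$ terms cancel. If the bookkeeping of the $(1,1)$ entry turns out off by one, I would adjust by swapping $(e_1,0)$ for another point of the form $(e_1,u)$, and recheck via the same Schur complement.
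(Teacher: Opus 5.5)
Your proposal is correct and is the paper's construction in other coordinates. The linear map sending the paper's $e_1,e_2,e_3,e_4,e_{4+k}$ to your $(e_2,0),(e_3,0),(e_4,0),(e_1,0),(e_1,f_k)$ carries the paper's cap $T$ onto $Q\times\mathbb{F}_2^{m-4}$, and it carries the paper's removed set $\{e_1,\ldots,e_m,e_1+e_2+e_3+e_4\}$ onto your $X$. So the difference is only one of bookkeeping: your product description replaces the paper's three-case cap check, but you then need a Schur complement, whereas the paper's coordinates make $U_X=(I_4+J_4)\oplus I_{m-4}$ block diagonal.

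There is one slip, in the spanning remark. Since $\lvert S\rvert<2^{m-1}$, your size argument does not apply. Spanning still holds, either from your explicit points or because $m=\rank(U_S)\le\rank(H_S)$ once $U_S=H_SH_S^T$ is nonsingular.
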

\begin{proof}
Define $T, S_2 \subset \mathbb{F}_2^m$ by
    \begin{align*}
        T := &\{v=(a_1,a_2,\ldots,a_m)\in \mathbb{F}_2^m \mid v\not\in H \text{ and at most one of $a_1,a_2,a_3$ is $1$} \} \\ \cup\; &\{v=(a_1,a_2,\ldots,a_m)\in \mathbb{F}_2^m \mid v\in H \text{ and } a_1=a_2=a_3=1\}, \\
        S_2 := &\;T\setminus\{e_1,e_2,\ldots,e_m,e_1+e_2+e_3+e_4\},
    \end{align*}
where $H = \gen{e_1+e_2+\cdots +e_m}^{\perp}$.
We verify that $T$ is a cap by showing that for any distinct $u,v\in T$, we have $u+v\notin T$.
Write $u=(u_1,\ldots,u_m)$ and $v=(v_1,\ldots,v_m)$, and let $w=u+v$.

\emph{Case 1: $u,v\in T\setminus H$.}
Both $u$ and $v$ have odd weight, so $w$ has even weight and thus $w\in H$.
For $w$ to lie in $T\cap H$, we would need $w_1=w_2=w_3=1$.
Since $u\notin H$, at most one of $u_1,u_2,u_3$ equals $1$, and similarly for $v$.
If both have zero $1$'s among the first three coordinates, then $(w_1,w_2,w_3)=(0,0,0)$.
If one has zero and the other has exactly one, then $w$ has exactly one $1$.
If both have exactly one $1$, then $w$ has either zero or two $1$'s among the first three coordinates.
In every sub-case, $w$ does not have $w_1=w_2=w_3=1$, so $w\notin T$.

\emph{Case 2: $u\in T\setminus H$, $v\in T\cap H$.}
Here $u$ has odd weight and $v$ has even weight, so $w$ has odd weight and $w\notin H$.
For $w$ to lie in $T\setminus H$, we would need at most one of $w_1,w_2,w_3$ to equal $1$.
Since $v\in T\cap H$, we have $(v_1,v_2,v_3)=(1,1,1)$.
If $(u_1,u_2,u_3)=(0,0,0)$, then $(w_1,w_2,w_3)=(1,1,1)$.
If $u$ has exactly one $1$ among the first three coordinates, then $w$ has exactly two $1$'s.
In both sub-cases, $w$ has more than one $1$ among the first three coordinates, so $w\notin T$.

\emph{Case 3: $u,v\in T\cap H$.}
Both have even weight, so $w$ has even weight and $w\in H$.
Since $u_1=u_2=u_3=v_1=v_2=v_3=1$, we have $(w_1,w_2,w_3)=(0,0,0)$, so $w\notin T\cap H$.
Since $w\in H$, we also have $w\notin T\setminus H$.

\noindent In all cases $u+v\notin T$, so $T$ is a cap. Hence so is $S_2$. Therefore, $C_{S_2}$ is a $[2^{m-2}+2^{m-4}-m-1,2^{m-2}+2^{m-4}-2m-1,\geq 4]$ code, by \cref{prop:cap-to-code}.
Since $\Per(T) = \gen{e_4, e_5, \ldots, e_m} \cap H$ has dimension $m-4 \geq 3$, we have $U_T=0$ by \cref{lem:rank-bounds}(3). Therefore,
    \begin{align*}
        U_{S_2} &= U_{\{e_1,\ldots, e_m\}}+U_{\{e_1+e_2+e_3+e_4\}}\\
        &=\left(
    \begin{array}{cccc}
0 & 1 & 1 & 1\\
1 & 0 & 1 & 1\\
1 & 1 & 0 & 1\\
1 & 1 & 1 & 0
    \end{array}
    \right)\oplus  I_{m-4},
    \end{align*}
where $\oplus$ denotes the direct sum (block diagonal) of matrices. Over $\mathbb{F}_2$, the determinant of the $4 \times 4$ block equals $-3 \equiv 1$, so $U_{S_2}$ is nonsingular and $C_{S_2}$ is LCD, by \cref{prop:lcd-nonsingular}.
\end{proof}

\begin{corollary}[Nonexistence for Codimension $7$]\label{cor:nonexist_m7}
Let $n \geq 34$.
Any LCD $[n,n-7,d]$ code with $d \geq 4$ must satisfy $n \equiv 1 \pmod{2}$ and $n \le 57$.
\end{corollary}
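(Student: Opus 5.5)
This is a direct specialization of \cref{thm:nonexist} to $m=7$, so the plan is simply to check that the hypotheses of that theorem are met and then evaluate its conclusions at $m=7$.

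\textbf{Length threshold.} The hypothesis $m \geq 7$ holds with equality. For the length condition, I compute
\[
2^{m-2} + 2^{m-4} - m + 1 = 32 + 8 - 7 + 1 = 34
\]
at $m=7$. Hence the assumption $n \geq 34$ is exactly the length hypothesis $n \geq 2^{m-2} + 2^{m-4} - m + 1$ of \cref{thm:nonexist}.

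\textbf{Conclusions.} Applying \cref{thm:nonexist} to an LCD $[n,n-7,d]$ code with $d\ge 4$ gives $n \equiv 7 \equiv 1 \pmod 2$. It also gives $n \le 2^{m-1} - m = 64 - 7 = 57$.

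There is no real obstacle; the only care needed is the arithmetic of the two thresholds. It is also worth remarking afterward that both bounds are attained for codimension $7$. \cref{prop:construction_S1} gives an LCD $[57,50,\ge 4]$ code. \cref{prop:construction_S2} gives an LCD $[32,25,\ge 4]$ code, since $2^{5}+2^{3}-7-1 = 32$.
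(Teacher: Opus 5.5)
Your proof is correct and is exactly the paper's argument: \cref{thm:nonexist} specialized to $m=7$, with the thresholds $2^{5}+2^{3}-7+1=34$ and $2^{6}-7=57$ computed correctly. Your closing remark also holds, with one caveat: the LCD $[32,25,\ge 4]$ code from \cref{prop:construction_S2} shows that the length threshold is sharp, since the parity conclusion fails at the even length $32$, rather than that it is literally ``attained.''
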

\begin{proof}
    This is a specialization of \cref{thm:nonexist} for $m=7$.
\end{proof}

\cref{thm:nonexist} requires $m \ge 7$.
For $4 \le m \le 6$, we can use similar arguments to obtain the following.

\begin{theorem}\label{thm:nonexist_small}
Let $4 \le m \le 6$ and $n \geq 2^{m-2}+2$, or $m=6$ and $n=2^{m-2}+1$. 
Any LCD $[n,n-m,d]$ code with $d \geq 4$ must satisfy $n \equiv m \pmod{2}$ and $n \le 2^{m-1} - m$.
\end{theorem}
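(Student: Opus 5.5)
We mimic the proof of \cref{thm:nonexist}. By \cref{prop:lcd-cap} and \cref{prop:lcd-nonsingular}, the code is $C_S$ for a cap $S\subseteq\mathbb{F}_2^m$ with $\lvert S\rvert=n$, $\langle S\rangle=\mathbb{F}_2^m$ and $\rank(U_S)=m$. Let $T\supseteq S$ be a maximal cap. The goal is to show that $T=\mathbb{F}_2^m\setminus\langle v\rangle^\perp$ for some $v$. Once this holds, \cref{lem:even-code} makes $C_S$ even, so \cref{lem:even-lcd} forces $n-m$ to be even. Moreover, $\Per(T)=\langle v\rangle^\perp$ has size $2^{m-1}\ge 8$, so \cref{lem:rank-bounds}(3) gives $U_T=0$. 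Hence $U_S=U_{T\setminus S}$ and $\lvert T\setminus S\rvert\ge m$, which yields $n\le 2^{m-1}-m$, exactly as in \cref{thm:main}.

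To identify $T$, apply \cref{thm:bw}: $\lvert T\rvert=2^{m-2}+2^j$ with $j\in\{0,\dots,m-4\}\cup\{m-2\}$. For $m=4$ the only option is $j\in\{0,2\}$. For $m=5$ it is $j\in\{0,1,3\}$, and for $m=6$ it is $j\in\{0,1,2,4\}$. We must exclude every $j\le m-4$. Set $X=T\setminus S$, so that $\rank(U_T)\ge m-\lvert X\rvert=m-(\lvert T\rvert-n)$.

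\emph{Case $j\ge 1$, $j\le m-4$, and $n\ge 2^{m-2}+2$.} This occurs only for $m=5$ with $j=1$, or $m=6$ with $j\in\{1,2\}$. Here $\Per(T)\ne\{0\}$, so $\rank(U_T)\le 2$ by \cref{lem:rank-bounds}(2). Then $\lvert X\rvert\ge m-2$, i.e.\ $n\le 2^{m-2}+2^j-m+2$.
\begin{itemize}
\item For $m=5$, $j=1$: this gives $n\le 7<10$, a contradiction.
\item For $m=6$, $j=1$: this gives $n\le 14<18$.
\item For $m=6$, $j=2$: this gives $n\le 16<18$.
\end{itemize}
Both $m=6$ subcases are contradictions. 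In every subcase the bound also falls below $2^{m-2}+1$, so the same argument covers the extra case $m=6$, $n=17$.

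\emph{Case $j=0$.} Now $\lvert T\rvert=2^{m-2}+1$, so $n\le 2^{m-2}+1$. This is immediately excluded when $n\ge 2^{m-2}+2$. The remaining, genuinely delicate case is $m=6$ and $n=17=\lvert T\rvert$, i.e.\ $S=T$ is a maximal cap of size $17$ with trivial period. I expect this to be the main obstacle. The plan is to invoke \cref{lem:rank-bounds}(4), which gives $\rank(U_S)\le 2<6$, a contradiction. This requires a hyperplane $H$ with $\lvert S\cap H\rvert=1$.

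To find $H$, I would use a counting argument on hyperplane intersections of a cap. Write $x_H=\lvert S\cap H\rvert$ for the $63$ hyperplanes $H$. The standard identities are
\[
\sum_H x_H=17\cdot 31,\qquad \sum_H x_H(x_H-1)=17\cdot16\cdot15 .
\]
The cap property constrains the sizes $\lvert S\setminus H\rvert$: each such set lies in an affine space, which is cap-free of sums. Combined with the structure of maximal $17$-caps in $\PG(5,2)$ from Bruen and Wehlau, for which the hyperplane intersection numbers are known to include $1$, this should force some $x_H=1$. If the counting alone does not suffice, I would fall back on the explicit description of the (essentially unique) complete $17$-cap in $\PG(5,2)$ and exhibit $H$ directly.

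\emph{Case $j=m-2$.} This gives the desired form of $T$, which completes the argument.
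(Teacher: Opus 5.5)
Your argument is the paper's proof. It has the same reduction to showing $T=\mathbb{F}_2^m\setminus\langle v\rangle^\perp$. It uses Lemma~\ref{lem:rank-bounds}(2) in the same way to rule out $1\le j\le m-4$; your case-by-case bounds are the paper's inequality $2\ge\rank(U_T)\ge m+1-2^{m-4}$ written out. In the single leftover case $m=6$, $n=17$, $j=0$ (so $S=T$), it makes the same appeal to Lemma~\ref{lem:rank-bounds}(4).

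The one step you have not secured is the existence of a hyperplane $H$ with $\lvert T\cap H\rvert=1$. The paper does not prove this either. It takes it directly from Bruen and Wehlau's Corollary~10.2(1), and that citation is what must carry the step.

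Your counting sketch cannot replace it. The two identities hold for any $17$ distinct nonzero vectors, and even together with the cap property they do not force some $x_H=1$. For example, delete $15$ affinely spanning points from the $32$-point set $\mathbb{F}_2^6\setminus H_0$; the remaining $17$ points form a cap $S$, and no hyperplane meets $S$ in exactly one point:
\begin{itemize}
\item $H_0$ misses $S$ entirely.
\item For $H'\neq H_0$ one has $\lvert S\cap H'\rvert=\lvert S\cap P\rvert$ with $P=H'\setminus H_0$ an affine hyperplane.
\item $\lvert S\cap P\rvert=1$ would put the complementary $16$ points inside $S$, i.e.\ all $15$ deleted points inside $P$, contradicting that they affinely span.
\end{itemize}
What distinguishes your situation is the maximality of $T$, and that enters only through Bruen--Wehlau's structure theory. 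You also do not need any uniqueness statement for the complete $17$-cap, and I would not rely on one.
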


\begin{proof}
We proceed in a similar manner to the proofs of \cref{thm:main} and~\cref{thm:nonexist}.
Since $C$ has minimum distance $d \geq 4$, by \cref{prop:lcd-cap} and \cref{prop:lcd-nonsingular}, there exists a cap $S \subseteq \mathbb{F}_2^m$ with $\lvert S\rvert = n$ such that $C = C_S$ and $U_S$ is nonsingular.

Let $T$ be a maximal cap containing $S$.
Since $n \ge 2^{m-2} + 1$, there exists $j \in \{0,1,\dots,m-4\} \cup \{m-2\}$ such that $\lvert\Per(T)\rvert=2^j$ and $\lvert T\rvert =2^{m-2}+2^j$. 

First, we prove $\rank(U_T)\leq 2$. 
If $j\neq 0$, this follows from \cref{lem:rank-bounds}(2). 
Therefore, we may assume $j=0$ and $m=6$. By \cite[Corollary 10.2(1)]{bruen1999long}, there exists a hyperplane $H\subset \mathbb{F}_2^m$ such that $\lvert T\cap H\rvert=1$. Hence $\rank(U_T) \leq 2$ by \cref{lem:rank-bounds}(4).

We aim to prove $j = m-2$. 
To this end, suppose to the contrary that $4\le m \le 6$ and $j \le m-4$.
With the notation in the proof of \cref{thm:nonexist}, we have
\begin{align*}
\rank(U_T)
&\ge \rank(U_S) - \lvert X\rvert\\
&= \rank(U_S) - (\lvert T \rvert - \lvert S\rvert)\\
&\ge m - ((2^{m-2} + 2^{m-4}) - (2^{m-2}+1))\\
&= (m+1) - 2^{m-4}.
\end{align*}
Combining these two inequalities, we obtain
\[
2 \ge \rank(U_T) \ge (m+1) - 2^{m-4}.
\]
One can easily verify that for $4 \le m \le 6$, the above inequality cannot be satisfied.
Therefore, we must have $j = m-2$, as required.
This means $T = \mathbb{F}_2^m \setminus \langle v \rangle^\perp$ for some $v \in \mathbb{F}_2^m$ by \cref{thm:bw}.
This also gives $n \le 2^{m-1} - m$ by the same argument as in the proof of \cref{thm:main}. (Note that since $m \geq 4$, we have $\lvert \Per(T)\rvert = 2^{m-1} \geq 8$, so $U_T=0$ by \cref{lem:rank-bounds}(3).)
Finally, by the same argument as in the proof of \cref{thm:nonexist}, we see that $n \equiv m \pmod 2$.
This completes the proof.
\end{proof}

\begin{corollary}[Nonexistence for Codimension $6$]\label{cor:nonexist_m6}
Let $n \geq 17$.
Any LCD $[n,n-6,d]$ code with $d \geq 4$ must satisfy $n \equiv 0 \pmod{2}$ and $n \le 26$.
\end{corollary}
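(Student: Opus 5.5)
This corollary is the specialization of \cref{thm:nonexist_small} to $m=6$, so the plan is to check that the hypothesis $n \geq 17$ is covered by that theorem's hypotheses and then read off the conclusions.

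For $m=6$ we have $2^{m-2}=16$. Thus $2^{m-2}+1 = 17$ and $2^{m-2}+2 = 18$.

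The theorem applies when $4 \le m \le 6$ and $n \geq 2^{m-2}+2$, or when $m=6$ and $n = 2^{m-2}+1$. I will split the hypothesis $n \geq 17$ into two cases:
\begin{itemize}
\item $n = 17$, which falls under the extra case $m=6$, $n=2^{m-2}+1$;
\item $n \geq 18$, which falls under the general case $n \geq 2^{m-2}+2$.
\end{itemize}
In both cases \cref{thm:nonexist_small} applies to any LCD $[n,n-6,d]$ code with $d\ge 4$.

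That theorem gives $n \equiv m \pmod 2$, which here reads $n \equiv 0 \pmod 2$. It also gives $n \le 2^{m-1}-m$, which here reads $n \le 32-6 = 26$.

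Every step is a direct substitution; the only point needing care is noticing that $n=17$ is handled by the special $m=6$ clause rather than the general bound. As a remark, $n=17$ is odd, so the corollary in fact rules out an LCD $[17,11,\ge 4]$ code outright.
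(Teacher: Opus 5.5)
Your proposal is correct and matches the paper's proof, which is simply the specialization of \cref{thm:nonexist_small} to $m=6$. You check explicitly that $n=17$ falls under the special clause $n=2^{m-2}+1$ and that $n\ge 18$ falls under the general clause $n\ge 2^{m-2}+2$; this is exactly the bookkeeping the paper leaves implicit.
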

\begin{proof}
    This is a specialization of \cref{thm:nonexist_small} for $m=6$. 
\end{proof}

\subsection{Complete Results for Codimensions $6$ and $7$}

Previously, in~\cite{characterizations2024}, the values of $d_2^E(n,n-6)$ were completely determined, but the proof for odd $n$ with $17 \le n \le 26$ relied on heavy computation.
For $d_2^E(n,n-7)$, the values for $n \in \{34,36,\ldots,56\} \cup \{58,\ldots,64\}$ remained undetermined.

\cref{cor:nonexist_m6} shows that LCD $[n,n-6,4]$ codes do not exist for odd $n$ with $17 \le n \le 26$, providing a computation-free proof for the nonexistence results that previously required exhaustive search.
\cref{cor:nonexist_m7} rules out the existence of LCD $[n,n-7,4]$ codes for all even $n$ with $34 \le n \le 57$, as well as for all $n \ge 58$.
Combined with the constructions in~\cite{characterizations2024}, this completely determines $d_2^E(n,n-7)$.

\begin{theorem}[Complete Determination for Codimension $7$]\label{thm:complete-m7}
Let $d_2^E(n,n-7)$ be the largest minimum distance among binary
Euclidean LCD $[n,n-7]$ codes.
Then 
\[
d_2^E(n,n-7)
=
\begin{cases}
7& \text{ if } n=8, \\
6& \text{ if } n=9, \\
5& \text{ if } n=10, \\
4& \text{ if } n \in 
\{11,12,\ldots,33\} \cup
\left\{
\begin{array}{ll}
35,37,39,41,\\
43,45,47,49,\\
51,53,55,57
\end{array}
\right\},\\
%4& \text{ if } n \in \{11,12,\ldots,31\} \\ 
%& \qquad \cup \{2m+1 \mid m \in \{16,17,\ldots,28\}\},\\
3& \text{ if } n \in 
\left\{
\begin{array}{ll}
34,36,38,40,\\
42,44,46,48,\\
50,52,54,56
\end{array}
\right\}
\cup \{58,59,\ldots,120\},\\
2& \text{ if } n \in \{121,122,\ldots\}.\\
\end{cases}
\]
\end{theorem}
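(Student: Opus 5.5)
The plan is to pin down each value by an upper bound (nonexistence) and a matching lower bound (construction), treating the ranges of $n$ separately. The range $n \le 33$ and the transitions to distance $3$ and $2$ are already handled in \cite{characterizations2024}, and I will cite them. For $n=8,9,10$ the values $7,6,5$ follow from the known small-length results there, and I will simply quote them.

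For $11 \le n \le 33$, distance $4$ is achieved by the constructions in \cite{characterizations2024}. Distance $5$ is impossible there by the sphere-packing or Hamming-type bounds used in that paper. For the $\ge 3$ versus $2$ boundary, the Hamming-code framework of \cite{characterizations2024} shows that LCD $[n,n-7,3]$ codes exist exactly for $n \le 120 = 2^7-1-7$. This gives the value $2$ for $n \ge 121$ and the lower bound $3$ for all $n \le 120$.

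The new input is in the range $34 \le n \le 120$. First, distance $\ge 5$ is impossible for $n \ge 34$. Since $d_2^E(n,n-7)\le 4$ already holds at smaller $n$, I will argue via the known optimal-code bound: an $[n,n-7,5]$ code requires $1+n+\binom n2 \le 2^7$, which forces $n \le 15$. Then \cref{cor:nonexist_m7} shows that no LCD $[n,n-7,\ge 4]$ code exists for even $n$ with $34\le n\le 56$, nor for any $n\ge 58$. Together with the previous paragraph, this gives the value $3$ on exactly the listed set.

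For odd $n$ with $35\le n\le 57$, I need LCD $[n,n-7,4]$ codes. The endpoint $n=57$ is \cref{prop:construction_S1} with $m=7$. For the other odd $n$ I plan to take caps inside $T=\mathbb{F}_2^7\setminus\langle v\rangle^\perp$ with $v=e_1+\cdots+e_7$, removing a set $X$ with $U_X$ nonsingular. Since $U_T=0$, we get $U_{T\setminus X}=U_X$. So it suffices to find odd-weight sets $X$ of size $|X| = 64-n$, an odd number between $7$ and $29$, spanning a space on which $\sum_{x\in X}xx^T$ is nonsingular, while $T\setminus X$ still spans $\mathbb{F}_2^7$. The spanning condition is automatic because $|T\setminus X|\ge 35 > 2^6$.

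A natural choice is $X=\{e_1,\dots,e_7\}\cup Y$, where $Y$ is a union of pairs or other sets with $U_Y=0$. For example, $Y$ can be a set of four odd-weight vectors forming an affine plane inside $T$, with period of size $4$. Its rank-one contributions cancel only partially, so I would instead use $Y$ equal to a coset of a $3$-dimensional subspace of $\langle v\rangle^\perp$ lying in $T$ and disjoint from the $e_i$. This has size $8$ and, by \cref{lem:rank-bounds}(3), satisfies $U_Y=0$. Adding several disjoint such cosets changes $|X|$ by $8$. To reach every odd size $7,9,\dots,29$, I would adjust with small sets of size $2$, such as $\{a,b\}$ with $aa^T+bb^T$ compensated suitably. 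Alternatively, I would replace the $e_i$ by other frames, such as the size-$9$ and size-$11$ sets with nonsingular Gram matrix in the spirit of \cref{prop:construction_S2}, which gives $|X|=11$ after removing $T$-structure.

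The main obstacle is producing a clean parametric family covering all twelve odd lengths. If this becomes messy, I will fall back on explicit computer-free verification for each of the few remaining sizes, or cite \cite{characterizations2024} if those lengths were already constructed there. The existence statement for odd $n \le 33$ at distance $4$ is likewise quoted.
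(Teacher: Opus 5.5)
Your proposal follows the paper's route: the only new ingredient is \cref{cor:nonexist_m7}, which caps the distance at $3$ for even $34\le n\le 56$ and all $n\ge 58$. Every other value and all lower-bound constructions are quoted from \cite{characterizations2024} and earlier tables, exactly as the paper does; this includes the LCD $[n,n-7,4]$ codes for odd $35\le n\le 57$, of which your coset construction as written yields only $n\in\{41,49,57\}$.

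Three corrections. Sphere packing excludes $d\ge 5$ only for $n\ge 16$, and a non-LCD $[11,4,5]$ code exists, so the values for $11\le n\le 15$ must be quoted as known results (LCD-specific at $n=11$) rather than derived from that bound. Also, $35>2^6$ is false: a subset of $T$ inside a hyperplane has at most $2^5=32$ points, so the correct reason is $35>2^5$. In any case, spanning of $T\setminus X$ follows directly from $U_{T\setminus X}$ being nonsingular.
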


\subsection{Complete Results for Codimension $8$}

We now completely determine the largest minimum weights $d_2^E(n,n-8)$ among binary LCD $[n,n-8]$ codes for arbitrary~$n$.

It is known~\cite[Table~1]{galvez2018bounds}, \cite[Table~3]{HaradaSaito}, \cite[Tables~1 and~2]{bouyuklieva2021optimal} that
\begin{equation}\label{eq:known-m8-small}
d_2^E(n,n-8)
=
\begin{cases}
9 & \text{if } n=9,\\
6 & \text{if } n=10, \\
5 & \text{if } n \in \{11,12,\ldots,17\},\\
4 & \text{if } n \in \{18,19,\ldots,40\}.
\end{cases}
\end{equation}
By Theorems~11 and~14 in~\cite{characterizations2024}, we have that
\begin{equation}\label{eq:known-m8-large}
d_2^E(n,n-8) =
\begin{cases}
3 & \text{if } n = 247,\\
2 & \text{if } n \in \{248, 249, \ldots\}.
\end{cases}
\end{equation}
It remains to determine the values $d_2^E(n,n-8)$ for
\[
n \in \{41, 42, \ldots, 246\}.
\]

First, we collect upper bounds on $d_2^E(n,n-8)$.
Let $d(n,k)$ denote the largest minimum distance among all (not necessarily LCD) binary $[n,k]$ codes.
It is known~\cite{codetables} that
\begin{equation}\label{eq:d-m8}
d(n,n-8)
=
\begin{cases}
4 & \text{if } n \in \{41,42,\ldots,128\},\\
3 & \text{if } n \in \{129,130,\ldots,247\}.
\end{cases}
\end{equation}
Furthermore, the main theorem of this paper gives the following nonexistence result, which is essential for the determination of $d_2^E(n,n-8)$.
\begin{corollary}[Nonexistence for Codimension $8$]\label{cor:nonexist_m8}
Let $n \geq 73$.
Any LCD $[n,n-8,d]$ code with $d \geq 4$ must satisfy $n \equiv 0 \pmod{2}$ and $n \le 120$.
\end{corollary}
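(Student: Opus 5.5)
This is the specialization of \cref{thm:nonexist} to $m=8$, exactly as \cref{cor:nonexist_m7} specializes it to $m=7$. First I will check that the hypotheses match. For $m=8$ the threshold in \cref{thm:nonexist} is
\[
2^{m-2}+2^{m-4}-m+1 = 64+16-8+1 = 73,
\]
which is precisely the assumption $n\ge 73$. The requirement $m\ge 7$ also holds.

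\cref{thm:nonexist} then applies to any LCD $[n,n-8,d]$ code with $d\ge 4$ and $n \geq 73$. It gives two conclusions:
\begin{itemize}
\item $n\equiv 8\equiv 0 \pmod 2$;
\item $n\le 2^{m-1}-m = 128-8 = 120$.
\end{itemize}
These are exactly the two claims of the corollary.

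For the reader's orientation I will also recall where the argument comes from. \cref{prop:lcd-cap} and \cref{prop:lcd-nonsingular} turn the code into a cap $S\subseteq\mathbb{F}_2^8$ with $|S|=n$ and $\rank(U_S)=8$. \cref{thm:main} then places $S$ inside a hyperplane complement $\mathbb{F}_2^8\setminus\gen{v}^\perp$, because $|S|\ge 73 = 2^{6}+2^{4}-8+1$. From there, \cref{lem:even-code} together with \cref{lem:even-lcd} gives the parity condition, and the bound $|S|\le 2^{7}-\rank(U_S)$ gives $n\le 120$.

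There is no real obstacle here. The only point to verify is the arithmetic showing that the threshold and the bound evaluate to $73$ and $120$.
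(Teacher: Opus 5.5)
Your proposal is correct and matches the paper, which proves this corollary simply as the $m=8$ specialization of \cref{thm:nonexist}. Your arithmetic is right: the threshold is $2^{6}+2^{4}-8+1=73$, the bound is $2^{7}-8=120$, and the parity condition is $n\equiv 8\equiv 0 \pmod 2$.
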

\begin{proof}
    This is a specialization of \cref{thm:nonexist} for $m=8$.
\end{proof}

Now, we complete the determination by constructing LCD $[n,n-8]$ codes that meet the above upper bounds.
Recall from \cref{sec:cap-lcd} that for $S \subseteq \mathbb{F}_2^m \setminus \{0\}$ with $\langle S \rangle = \mathbb{F}_2^m$, the code $C_S$ with parity-check matrix $H_S$ is an LCD $[\lvert S \rvert, \lvert S \rvert-m]$ code if and only if $U_S$ is nonsingular.
Moreover, $C_S$ has minimum distance at least $4$ if and only if $S$ is a cap (\cref{prop:lcd-cap} and~\cref{prop:cap-to-code}).
We performed a computational search for subsets $S \subseteq \mathbb{F}_2^8 \setminus \{0\}$ satisfying these conditions.
\begin{itemize}
    \item For $n \in \{41,42,\ldots,72\} \cup \{74,76,\ldots,120\}$, we found caps $S$ of size $n$ in $\mathbb{F}_2^8$ with $U_S$ nonsingular, yielding LCD $[n, n-8, \geq 4]$ codes $C_S$.
    \item For odd $n \in \{73, 75, \ldots, 119\}$ and $n \in \{121, 122, \ldots, 246\}$, we found subsets $S \subseteq \mathbb{F}_2^8 \setminus \{0\}$ of size $n$ with $U_S$ nonsingular, yielding LCD $[n, n-8, \geq 3]$ codes $C_S$.
\end{itemize}
The sets $S$ are available at~\cite{github_repo_codim8}. 
These constructions show that
\begin{equation}\label{eq:lcd-cap-m8}
d_2^E(n,n-8) \ge 4 \quad \text{if } n \in \{41,42,\ldots,72\} \cup \{74,76,\ldots,120\},
\end{equation}
\begin{equation}\label{eq:lcd-d3-m8}
d_2^E(n,n-8) \ge 3 \quad \text{if } n \in \{73,75,\ldots,119\} \cup \{121,122,\ldots,246\}.
\end{equation}
\cref{tab:m8_bounds} summarizes how the upper and lower bounds combine to determine $d_2^E(n,n-8)$ for each range of $n$.
From \eqref{eq:known-m8-small}--\eqref{eq:lcd-d3-m8}, we have the following theorem.
\begin{theorem}[Complete Determination for Codimension $8$]\label{thm:complete-m8}
Let $d_2^E(n,n-8)$ be the largest minimum distance among binary
Euclidean LCD $[n,n-8]$ codes.
Then
\[
d_2^E(n,n-8)
=
\begin{cases}
9& \text{ if } n=9, \\
6& \text{ if } n=10, \\
5& \text{ if } n \in \{11,12,\ldots,17\}, \\
4& \text{ if } n \in \{18,19,\ldots,72\}
\cup \{74,76,\ldots,120\},\\
3& \text{ if } n \in
\{73,75,\ldots,119\}
\cup \{121,122,\ldots,247\},\\
2& \text{ if } n \in \{248,249,\ldots\}.\\
\end{cases}
\]
\end{theorem}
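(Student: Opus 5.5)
The proof combines upper bounds from nonexistence results with matching lower bounds from constructions, range by range in $n$. I would organize it by the values listed in the statement, checking for each range that some upper bound and some lower bound coincide.

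For $n\le 40$, I would simply cite the known values \eqref{eq:known-m8-small}. For $n\ge 247$, I would cite \eqref{eq:known-m8-large}. The remaining range is $41\le n\le 246$.

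For the upper bounds in this range:
\begin{itemize}
\item For $n\in\{41,\dots,128\}$, the general bound $d_2^E(n,n-8)\le d(n,n-8)=4$ from \eqref{eq:d-m8} applies, since an LCD code is in particular a linear code.
\item For $n\in\{129,\dots,246\}$, the same inequality gives $d_2^E(n,n-8)\le 3$.
\item For odd $n$ with $73\le n\le 119$, and for all $n$ with $121\le n\le 128$, I would invoke \cref{cor:nonexist_m8}. An LCD $[n,n-8,d]$ code with $d\ge 4$ and $n\ge 73$ must have $n$ even and $n\le 120$. Neither family satisfies both conditions, so $d_2^E(n,n-8)\le 3$ there.
\end{itemize}

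For the lower bounds, I would use the computational constructions. By \eqref{eq:lcd-cap-m8}, $d_2^E(n,n-8)\ge 4$ for $n\in\{41,\dots,72\}\cup\{74,76,\dots,120\}$. By \eqref{eq:lcd-d3-m8}, $d_2^E(n,n-8)\ge 3$ for odd $n\in\{73,\dots,119\}$ and for $n\in\{121,\dots,246\}$.

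Combining the two sides:
\begin{itemize}
\item $d_2^E(n,n-8)=4$ on $\{41,\dots,72\}\cup\{74,76,\dots,120\}$, which together with $\{18,\dots,40\}$ from the known values gives the stated set.
\item $d_2^E(n,n-8)=3$ on the odd values $73,\dots,119$, on $\{121,\dots,246\}$, and at $n=247$ from \eqref{eq:known-m8-large}.
\item $d_2^E(n,n-8)=2$ for $n\ge 248$, also from \eqref{eq:known-m8-large}.
\end{itemize}

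The main step to verify is that the ranges partition $\{41,\dots,246\}$ with no gaps. In particular, the interval $121\le n\le 128$ needs the corollary rather than \eqref{eq:d-m8}, because \eqref{eq:d-m8} only gives 4 there. I would also confirm that \cref{cor:nonexist_m8} applies at $m=8$: its threshold $2^{6}+2^{4}-8+1=73$ matches the one used above.
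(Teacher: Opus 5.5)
Your proposal is correct and is essentially the paper's argument. It uses the same ingredients: upper bounds from $d(n,n-8)$ in \eqref{eq:d-m8}, and from \cref{cor:nonexist_m8} for odd $73\le n\le 119$ and for $121\le n\le 128$; lower bounds from the computational constructions \eqref{eq:lcd-cap-m8}--\eqref{eq:lcd-d3-m8}; and the known values for $n\le 40$ and $n\ge 247$. This is exactly how \cref{tab:m8_bounds} combines them, including your checks that the ranges cover $\{41,\dots,246\}$ without gaps and that the corollary's threshold is $73$.
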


\begin{table}[htb]
\centering
\caption{Determination of $d_2^E(n,n-8)$: sources of upper and lower bounds.}\label{tab:m8_bounds}
\begin{tabular}{ccll}
\hline
$n$ & $d_2^E$ & Lower bound & Upper bound \\
\hline
$9$--$40$ & $9$--$4$ & \cite{galvez2018bounds,HaradaSaito,bouyuklieva2021optimal} & \cite{galvez2018bounds,HaradaSaito,bouyuklieva2021optimal} \\
$41$--$72$ & $4$ & construction~\cite{github_repo_codim8} & \cite{codetables} ($d(n,n-8) \le 4$) \\
even $74$--$120$ & $4$ & construction~\cite{github_repo_codim8} & \cite{codetables} ($d(n,n-8) \le 4$) \\
odd $73$--$119$ & $3$ & construction~\cite{github_repo_codim8} & \cref{cor:nonexist_m8} ($n$ odd $\Rightarrow d \le 3$) \\
$121$--$128$ & $3$ & construction~\cite{github_repo_codim8} & \cref{cor:nonexist_m8} ($n > 120 \Rightarrow d \le 3$) \\
$129$--$246$ & $3$ & construction~\cite{github_repo_codim8} & \cite{codetables} ($d(n,n-8) \le 3$) \\
$247$ & $3$ & \cite{characterizations2024} & \cite{characterizations2024} \\
$\ge 248$ & $2$ & \cite{characterizations2024} & \cite{characterizations2024} \\
\hline
\end{tabular}
\end{table}

\section{Conclusion}\label{sec:conc}

We have established a connection between binary LCD codes and caps in projective space through the Gram matrix $U_S = \sum_{s \in S} ss^T$: a code $C_S$ from a cap is LCD if and only if $U_S$ is nonsingular.
Combined with Bruen and Wehlau's structure theory of large maximal caps, this framework yields nonexistence theorems for LCD codes with minimum distance at least $4$.
For codimension $6$, we provide a computation-free proof for results that previously required exhaustive search.
For codimensions $7$ and $8$, we completely resolve the open cases, determining $d_2^E(n,n-7)$ and $d_2^E(n,n-8)$ for all $n$.
To our knowledge, this is the first theoretical framework that explains the alternating pattern in $d_2^E(n,n-m)$ and applies uniformly to all $m \ge 3$.

Unfortunately, our approach does not extend directly to ternary LCD codes. The key ingredient---the detailed classification of large caps in $\PG(m-1,2)$ by Bruen and Wehlau---has no known analogue for $q > 2$; only partial results and general bounds exist for caps in $\PG(m-1,q)$. Thus, analogous results for nonbinary fields would require substantial new work in finite geometry.
Several problems remain open: determining $d_2^E(n,n-m)$ completely for $m \geq 9$, classifying caps $S \subseteq \mathbb{F}_2^m$ near the threshold with nonsingular $U_S$, and developing geometric frameworks for ternary Euclidean LCD codes of minimum distance at least $4$.

\section*{Acknowledgements}
This work was supported by JSPS KAKENHI Grant Numbers JP25K17290 and JP25KK0004.

\bibliographystyle{plain}

\end{document}